\newcommand{\techRep}{true} 
\newcommand{\iftechrep}{\ifthenelse{\equal{\techRep}{true}}}
\newcommand{\F}{\mathcal{F}}
\newcommand{\Min}{\mathit{Min}}
\newcommand{\Mt}{\DTMC_\mathit{type}}
\newcommand{\Mp}{\DTMC_\mathit{proc}}
\renewcommand{\S}{\mathcal{S}}
\renewcommand{\P}{\mathcal{P}}
\newcommand{\Pt}{\P_\mathit{type}}
\newcommand{\Pp}{\P_\mathit{proc}}
\newcommand{\Prob}{\mathit{Prob}}
\newcommand{\Q}{\mathbb{Q}}
\newcommand{\Run}{\mathit{Run}}
\newcommand{\deltat}{\delta_\mathit{type}}
\newcommand{\deltap}{\delta_\mathit{proc}}
\newcommand{\tran}[1]{\xrightarrow{#1}}
\newcommand{\U}{\mathsf{U}}
\newcommand{\Wt}[1]{|#1|_\mathit{type}}
\newcommand{\Wp}[1]{|#1|_\mathit{proc}}
\newcommand{\dist}{\mathit{dist}}
\renewcommand{\D}{\mathcal{D}}
\newcommand{\En}{\mathit{En}}
\newcommand{\last}{\mathit{last}}
\newcommand\uc{\uparrow}
\newcommand\eps{\varepsilon}
\newcommand\controls{\mathcal{Q}}
\newcommand\cmrules{\Delta} 
\newcommand{\up}{\mathord{\uparrow}}
\newenvironment{qtheorem}[1]{%
{\par\medskip\noindent\bf Theorem #1.}
\begin{itshape}%
}{%
\end{itshape}%
}
\newenvironment{qlemma}[1]{%
{\par\medskip\noindent\bf Lemma #1.}%
\begin{itshape}%
}{%
\end{itshape}%
}
\newenvironment{qproposition}[1]{%
{\par\medskip\noindent\bf Proposition #1.}
\begin{itshape}%
}{%
\end{itshape}%
}
\newcommand{\defn}[1]{\textit{#1}}
\newcommand{\Z}{\ensuremath{\mathbb{Z}}}
\newcommand{\N}{\ensuremath{\mathbb{N}}}
\newcommand{\OMIT}[1]{}
\newcommand{\DTMC}{\ensuremath{\mathcal{M}}}
\newcommand{\CM}{\ensuremath{{M}}}
\newcommand{\Sign}{\ensuremath{\text{sign}}}
\title{Analysis of Probabilistic Basic Parallel Processes}
\author{R{\'e}mi Bonnet\inst{1} \and Stefan Kiefer\inst{1}\thanks{Stefan Kiefer is supported by a Royal Society University Research Fellowship.} \and Anthony W.
    Lin\inst{1,2}}
\institute{University of Oxford, UK \and Academia Sinica, Taiwan}
\begin{document}

\maketitle

\begin{abstract}
    \OMIT{
Basic Parallel Processes (BPPs) are a well-known subclass of Petri Nets
where inter-process communications are restricted. They are the simplest
common model of concurrent programs that allows unbounded spawning of processes.
In this paper, we propose a natural probabilistic extension of BPPs, over which
we study several fundamental qualitative problems (i.e. checking if the
probabilities of certain events are 0 or 1).
Introducing probability in BPPs is motivated by the usefulness of randomisation
when designing
distributed protocols (e.g. for breaking symmetry).
%
Probabilistic BPPs come in two flavours: (i) Markov
Chains, and (ii) Markov Decision Processes (MDPs). For the Markov Chain version,
we show decidability of qualitative reachability with respect to upward closed
target sets, though we show a nonprimitive recursive lower bound. When
semilinear target
sets are used, we have undecidability. In the case of MDPs, we consider
two cases: angelic or demonic schedulers. In both cases, we show decidability
with respect to upward closed target sets assuming bounded fairness (without
this assumption, we show that qualitative reachability is sensitive to
perturbation of the probabilities of individual transitions).  Lastly, we
show natural restrictions of target sets that yield good complexity.
%
%
}

Basic Parallel Processes (BPPs) are a well-known subclass of Petri Nets.
They are the simplest common model of concurrent programs that allows unbounded
spawning of processes.
In the probabilistic version of BPPs,
every process generates other processes according to a probability distribution.
We study the decidability and complexity of fundamental qualitative problems
over probabilistic BPPs --- in particular reachability with probability 1 of
different classes of target sets (e.g.\ upward-closed sets).
Our results concern both the Markov-chain model, where processes are scheduled
randomly, and the MDP model, where processes are picked by a scheduler.

\end{abstract}

\section{Introduction} \label{sec-introduction}

We study probabilistic basic parallel processes (pBPP), which is a stochastic model for concurrent systems with unbounded process spawning.
Processes can be of different types,
 and each type has a fixed probability distribution for generating new sub-processes.
A pBPP can be described using a notation similar to that of stochastic context-free grammars. For instance,
 \begin{align*}
   & X  \btran{0.2}  X X  \qquad X \btran{0.3}  X Y \qquad  X \btran{0.5} \varepsilon \qquad \qquad  Y  \btran{0.7}  X    \qquad Y \btran{0.3}  Y
 \end{align*}
describes a system with two types of processes.
Processes of type~$X$ can generate two processes of type~$X$, one process of each type, or zero processes with probabilities
$0.2$, $0.3$, and  $0.5$, respectively.
Processes of type~$Y$ can generate one process, of type $X$ or~$Y$, with probability $0.7$ and~$0.3$.
The order of processes on the right-hand side of each rule is not important.
Readers familiar with process algebra will identify this notation as a probabilistic version of
Basic Parallel Processes (BPPs), which is widely studied in automated verification,
 see e.g.\ \cite{EK95,HirshfeldJM96,Esp97,JancarBPP03,HuttelKS09,FroschleJLS10},

A \emph{configuration} of a pBPP indicates, for each type~$X$, how many processes of type~$X$ are present.
Writing $\Gamma$ for the finite set of types, a configuration is thus an element of~$\N^\Gamma$.
In a configuration $\alpha \in \N^\Gamma$ with $\alpha(X) \ge 1$ an $X$-process may be scheduled.
Whenever a process of type~$X$ is scheduled, a rule with $X$ on the left-hand side is picked randomly according
 to the probabilities of the rules, and then an $X$-process is replaced by processes as on the right-hand side.
In the example above, if an $X$-process is scheduled, then with probability~$0.3$ it is replaced by a new $X$-process and by a new $Y$-process.
This leads to a new configuration, $\alpha'$, with $\alpha'(X) = \alpha(X)$ and $\alpha'(Y) = \alpha(Y) + 1$.

\emph{Which} type is scheduled in a configuration~$\alpha \in \N^\Gamma$ depends on the model under consideration.
One possibility is that the type to be scheduled is selected randomly among those types~$X$ with $\alpha(X) \ge 1$.
In this way, a pBPP induces an (infinite-state) Markov chain.
We consider two versions of this Markov chain: in one version the type to be scheduled is picked
 using a uniform distribution on those types with at least one waiting process;
 in the other version the type is picked using a uniform distribution on the waiting processes.
For instance, in configuration~$\alpha$ with $\alpha(X) = 1$ and $\alpha(Y) = 2$,
 according to the ``type'' version, the probability of scheduling~$X$ is $1/2$,
 whereas in the ``process'' version, the probability is $1/3$.
Both models seem to make equal sense, so we consider them both in this paper.
As it turns out their difference is unimportant for our results.

In many contexts (e.g. probabilistic distributed protocols --- see
\cite{Nor04,LSS94}), it is more natural that this scheduling decision is not
taken randomly, but by a \emph{scheduler}.
Then the pBPP induces a Markov decision process (MDP), where a scheduler picks a type~$X$ to be scheduled,
 but the rule with $X$ on the left-hand side is selected probabilistically according to the probabilities on the rules.

In this paper we provide decidability results concerning \emph{coverability} with probability~$1$, or ``almost-sure'' coverability,
 which is a fundamental qualitative property of pBPPs.
We say a configuration~$\beta \in \N^\Gamma$ \emph{covers} a configuration~$\phi \in \N^\Gamma$
 if $\beta \ge \phi$ holds, where $\mathord{\ge}$ is meant componentwise.
For instance, $\phi$ may model a configuration with one producer and one consumer;
 then $\beta \ge \phi$ means that a transaction between a producer and a consumer can take place.
Another example is a critical section that can be entered only when a lock is obtained.
Given a pBPP, an initial configuration~$\alpha$, and target configurations $\phi_1, \ldots, \phi_k$,
 the \emph{coverability problem} asks whether with probability~$1$ it is the case that starting from~$\alpha$
 a configuration $\beta$ is reached that covers some~$\phi_i$.
One can equivalently view the problem as almost-sure reachability of an upward-closed set.

In Section~\ref{sec-markov-chain} we show using a Karp-Miller-style construction that the coverability problem for pBPP Markov chains is decidable.
We provide a nonelementary lower complexity bound.
In Section~\ref{sec-mdp} we consider the coverability problem for MDPs.
There the problem appears in two flavours, depending on whether the scheduler is ``angelic'' or ``demonic''.
In the angelic case we ask whether there \emph{exists} a scheduler so that a target is almost-surely covered.
We show that this problem is decidable, and if such a scheduler does exist one can synthesize one.
In the demonic case we ask whether a target is almost-surely covered, no matter what the scheduler (an operating system, for instance) does.
For the question to make sense we need to exclude unfair schedulers, i.e., those that never schedule a waiting process.
Using a robust fairness notion ($k$-fairness), which does not depend on the exact probabilities in the rules,
 we show that the demonic problem is also decidable.
In Section~\ref{sec-ptime} we show for the Markov chain and for both versions of the MDP problem
 that the coverability problem becomes P-time solvable, if the target configurations~$\phi_i$ consist of only one process each (i.e., are unit vectors).
Such target configurations naturally arise in concurrent systems (e.g. freedom
from deadlock: whether \emph{at least one} process eventually goes into
a critical section).
Finally, in Section~\ref{sec-semilinear} we show that the almost-sure reachability problem for semilinear sets,
 which generalizes the coverability problem, is undecidable for pBPP Markov chains and MDPs.
Some missing proofs can be found \iftechrep{in the appendix.}{in~\cite{BKL14-fossacs-report}.}

\paragraph{Related work.}

(Probabilistic) BPPs can be viewed as (stochastic) Petri nets
 where each transition has exactly one input place.
Stochastic Petri nets, in turn, are equivalent to probabilistic vector addition systems with states (pVASSs),
 whose reachability and coverability problems were studied in~\cite{AbdullaHM07}.
This work is close to ours; in fact, we build on fundamental results of~\cite{AbdullaHM07}.
Whereas we show that coverability for the Markov chain induced by a pBPP is decidable,
 it is shown in~\cite{AbdullaHM07} that the problem is undecidable for general pVASSs.
In~\cite{AbdullaHM07} it is further shown for general pVASSs that coverability becomes decidable
 if the target sets are ``$Q$-states''.
If we apply the same restriction on the target sets, coverability becomes polynomial-time decidable for pBPPs, see Section~\ref{sec-ptime}.
MDP problems are not discussed in~\cite{AbdullaHM07}.

The MDP version of pBPPs was studied before under the name \emph{task systems}~\cite{BEKL12:IandC}.
There, the scheduler aims at a ``space-efficient'' scheduling, which is one where the maximal number of processes is minimised.
Goals and techniques of this paper are very different from ours.

Certain classes of non-probabilistic 2-player games on Petri nets were studied in~\cite{Raskin2005}.
Our MDP problems can be viewed as games between two players, Scheduler and Probability.
One of our proofs (the proof of Theorem~\ref{thm-mdp-universal}) is inspired by proofs in~\cite{Raskin2005}.

The notion of $k$-fairness that we consider in this paper is not new. Similar
notions have appeared in the literature of concurrent systems under the
name of ``bounded fairness'' (e.g. see \cite{BF} and its citations).

\section{Preliminaries} \label{sec-preliminaries}

We write $\N = \{0, 1, 2, \ldots\}$.
For a countable set $X$ we write $\dist(X)$ for the set of \emph{probability distributions} over~$X$;
 i.e., $\dist(X)$ consists of those functions $f : X \to [0,1]$ such that $\sum_{x \in X} f(x) = 1$.

\paragraph{Markov Chains.} A \emph{Markov chain} is a pair $\DTMC = (Q,\delta)$,
where $Q$ is a countable (finite or infinite) set of states, and $\delta: Q \to \dist(Q)$ is a probabilistic transition function
 that maps a state to a probability distribution over the successor states.
Given a Markov chain we also write $s \tran{p} t$ or $s \tran{} t$ to indicate that $p = \delta(s)(t) > 0$.
A \emph{run} is an infinite sequence $s_0 s_1 \cdots \in Q^\omega$ with $s_i \tran{} s_{i+1}$ for $i \in \N$.
We write $\Run(s_0 \cdots s_k)$ for the set of runs that start with $s_0 \cdots s_k$.
To every initial state $s_0 \in S$ we associate the probability space $(\Run(s_0),\F,\P)$
where $\F$ is the $\sigma$-field generated by all basic cylinders $\Run(s_0 \cdots s_k)$ with $s_0 \cdots s_k \in Q^*$,
and $\P: \F \to [0,1]$ is the unique probability measure such that $\P(\Run(s_0 \cdots s_k)) = \prod_{i=1}^{k} \delta(s_{i-1})(s_i)$.
For a state $s_0 \in Q$ and a set $F \subseteq Q$, we write $s_0 \models \Diamond F$ for the event that a run started in~$s_0$ hits~$F$.
Formally, $s_0 \models \Diamond F$ can be seen as the set of runs $s_0 s_1 \cdots$ such that there is $i \ge 0$ with $s_i \in F$.
Clearly we have $\P(s_0 \models \Diamond F) > 0$ if and only if in~$\DTMC$ there is a path from~$s_0$ to a state in~$F$.
Similarly, for $Q_1, Q_2 \subseteq Q$ we write $s_0 \models Q_1 \U Q_2$ to denote the set of runs $s_0 s_1 \cdots$
 such that there is $j \ge 0$ with $s_j \in Q_2$ and $s_i \in Q_1$ for all $i < j$.
We have $\P(s_0 \models Q_1 \U Q_2) > 0$ if and only if in~$\DTMC$ there is a path from $s_0$ to a state in~$Q_2$ using only states in~$Q_1$.
A Markov chain is \emph{globally coarse} with respect to a set $F \subseteq Q$ of configurations,
 if there exists $c > 0$ such that for all $s_0 \in Q$ we have that $\P(s_0 \models \Diamond F) > 0$ implies $\P(s_0 \models \Diamond F) \ge c$.

\paragraph{Markov Decision Processes.}
A \emph{Markov decision process (MDP)} is a tuple $\D = (Q, A, \En, \delta)$,
 where $Q$ is a countable set of states, $A$ is a finite set of actions,
 $\En : Q \to 2^A \setminus \emptyset$ is an action enabledness function that assigns to each state~$s$ the set $\En(s)$ of actions enabled in~$s$,
 and $\delta : S \times A \to \dist(S)$ is a probabilistic transition function that maps a state~$s$ and an action $a \in \En(s)$ enabled in~$s$
 to a probability distribution over the successor states.
A \emph{run} is an infinite alternating sequence of states and actions $s_0 a_1 s_1 a_2 \cdots$
 such that for all $i \ge 1$ we have $a_i \in \En(s_{i-1})$ and $\delta(s_{i-1}, a_i)(s_i) > 0$.
For a finite word $w = s_0 a_1 \cdots s_{k-1} a_k s_k \in Q (A Q)^*$ we write $\last(w) = s_k$.
A \emph{scheduler} for~$\D$ is a function $\sigma : Q (A Q)^* \to \dist(A)$
 that maps a run prefix $w \in Q (A Q)^*$, representing the history of a play,
 to a probability distribution over the actions enabled in~$\last(w)$.
We write $\Run(w)$ for the set of runs that start with $w \in Q (A Q)^*$.
To an initial state $s_0 \in S$ and a scheduler~$\sigma$ we associate the probability space $(\Run(s_0),\F,\P_\sigma)$,
 where $\F$ is the $\sigma$-field generated by all basic cylinders $\Run(w)$ with $w \in \{s_0\} (A Q)^*$,
 and $\P_\sigma: \F \to [0,1]$ is the unique probability measure such that
 $\P(\Run(s_0)) = 1$, and $\P(\Run(w a s)) = \P(\Run(w)) \cdot \sigma(w)(a) \cdot \delta(\last(w),a)(s)$
  for all $w \in \{s_0\} (A Q)^*$ and all $a \in A$ and all $s \in Q$.
A scheduler~$\sigma$ is called \emph{deterministic} if for all $w \in Q (A Q)^*$ there is $a \in A$ with $\sigma(w)(a) = 1$.
A scheduler~$\sigma$ is called \emph{memoryless} if for all $w, w' \in Q (A Q)^*$ with $\last(w) = \last(w')$ we have $\sigma(w) = \sigma(w')$.
When specifying events, i.e., measurable subsets of $\Run(s_0)$, the actions are often irrelevant.
Therefore, when we speak of runs $s_0 s_1 \cdots$ we mean the runs $s_0 a_1 s_1 a_2 \cdots$ for arbitrary $a_1, a_2, \ldots \in A$.
E.g., in this understanding we view $s_0 \models \Diamond F$ with $s_0 \in Q$ and $F \subseteq Q$ as an event.


\paragraph{Probabilistic BPPs and their configurations.}
A \emph{probabilistic BPP (pBPP)} is a tuple $\S = (\Gamma,\mathord{\btran{}},\Prob)$,
 where $\Gamma$ is a finite set of \emph{types},
 $\mathord{\btran{}} \subseteq \Gamma \times \N^\Gamma$ is a finite set of \emph{rules}
  such that for every $X \in \Gamma$ there is at least one rule of the form $X \btran{} \alpha$,
 and $\Prob$ is a function that to every rule \mbox{$X \btran{} \alpha$} assigns its
  probability \mbox{$\Prob(X \btran{} \alpha) \in (0,1] \cap \Q$} so that for
  all $X \in \Gamma$ we have $\sum_{X \btran{} \alpha} \Prob(X \btran{} \alpha) = 1$.
We write $X \btran{p} \alpha$ to denote that $\Prob(X \btran{} \alpha) = p$.
A \emph{configuration} of~$\S$ is an element of $\N^\Gamma$.
We write $\alpha_1 + \alpha_2$ and $\alpha_1 - \alpha_2$ for componentwise addition and subtraction of two configurations $\alpha_1, \alpha_2$.
When there is no confusion, we may identify words $u \in \Gamma^*$ with the configuration $\alpha \in \N^\Gamma$
 such that for all $X \in \Gamma$ we have that $\alpha(X) \in \N$ is the number of occurrences of~$X$ in~$u$.
For instance, we write $X X Y$ or $X Y X$ for the configuration~$\alpha$ with $\alpha(X) = 2$ and $\alpha(Y) = 1$
 and $\alpha(Z) = 0$ for $Z \in \Gamma \setminus \{X,Y\}$.
In particular, we may write~$\varepsilon$ for $\alpha$ with $\alpha(X) = 0$ for all $X \in \Gamma$.
For configurations $\alpha, \beta$ we write $\alpha \le \beta$ if $\alpha(X) \le \beta(X)$ holds for all $X \in \Gamma$;
 we write $\alpha < \beta$ if $\alpha \le \beta$ but $\alpha \ne \beta$.
For a configuration~$\alpha$ we define the \emph{number of types} $\Wt{\alpha} = |\{X \in \Gamma \mid \alpha(X) \ge 1\}|$
 and the \emph{number of processes} $\Wp{\alpha} = \sum_{X \in \Gamma} \alpha(X)$.
Observe that we have $\Wt{\alpha} \le \Wp{\alpha}$.
A set $F \subseteq \N^\Gamma$ of configurations is called \emph{upward-closed} (\emph{downward-closed}, respectively) if for all $\alpha \in F$
 we have that $\alpha \le \beta$ implies $\beta \in F$ ($\alpha \ge \beta$ implies $\beta \in F$, respectively).
For $\alpha \in \N^\Gamma$ we define $\alpha \up := \{\beta \in \N^\Gamma \mid \beta \ge \alpha\}$.
For $F \subseteq \N^\Gamma$ and $\alpha \in F$ we say that $\alpha$ is a \emph{minimal element} of~$F$, if there is no $\beta \in F$ with $\beta < \alpha$.
It follows from Dickson's lemma that every upward-closed set has finitely many minimal elements;
 i.e., $F$ is upward-closed if and only if $F = \phi_1 \up \cup \ldots \cup \phi_n \up$ holds for some $n \in \N$ and $\phi_1, \ldots, \phi_n \in \N^\Gamma$.

\paragraph{Markov Chains induced by a pBPP.}
To a pBPP~$\S = (\Gamma,\mathord{\btran{}},\Prob)$ we associate the Markov chains $\Mt(\S) = (\N^\Gamma, \deltat)$ and $\Mp(\S) = (\N^\Gamma, \deltap)$
 with $\deltat(\varepsilon,\varepsilon) = \deltap(\varepsilon,\varepsilon) = 1$ and for $\alpha \ne \varepsilon$
\[
 \deltat(\alpha,\gamma) = \mathop{\sum_{X \btran{p} \beta \text{ s.t.\ } \alpha(X) \ge 1}}_{\text{ and } \gamma = \alpha - X + \beta} \frac{p}{\Wt{\alpha}}
 \quad \text{and} \quad
 \deltap(\alpha,\gamma) = \mathop{\sum_{X \btran{p} \beta \text{ s.t.\ }}}_{\gamma = \alpha - X + \beta}
  \frac{\alpha(X) \cdot p}{\Wp{\alpha}} \,.
\]
In words, the new configuration~$\gamma$ is obtained from~$\alpha$ by replacing an $X$-process
 with a configuration randomly sampled according to the rules $X \btran{p} \beta$.
In~$\Mt(\S)$ the selection of~$X$ is based on the number of types in~$\alpha$,
 whereas in~$\Mp(\S)$ it is based on the number of processes in~$\alpha$.
We have $\deltat(\alpha,\gamma) = 0$ iff $\deltap(\alpha,\gamma) = 0$.
We write $\Pt$ and $\Pp$ for the probability measures in $\Mt(\S)$ and $\Mp(\S)$, respectively.

\paragraph{The MDP induced by a pBPP.}
To a pBPP~$\S = (\Gamma,\mathord{\btran{}},\Prob)$ we associate the MDP $\D(\S) = (\N^\Gamma, \Gamma \cup \{\bot\}, \En, \delta)$
 with a fresh action $\bot \not\in \Gamma$, and $\En(\alpha) = \{X \in \Gamma \mid \alpha(X) \ge 1\}$ for $\varepsilon \ne \alpha \in \N^\Gamma$
 and $\En(\varepsilon) = \{\bot\}$,
 and $\delta(\alpha,X)(\alpha - X + \beta) = p$ whenever $\alpha(X) \ge 1$ and $X \btran{p} \beta$,
 and $\delta(\varepsilon, \bot)(\varepsilon) = 1$.
As in the Markov chain, the new configuration~$\gamma$ is obtained from~$\alpha$ by replacing an $X$-process
 with a configuration randomly sampled according to the rules $X \btran{p} \beta$.
But in contrast to the Markov chain the selection of~$X$ is up to a scheduler.

\section{The Coverability Problem for the Markov Chain} \label{sec-markov-chain}

In this section we study the \emph{coverability problem} for the Markov chains induced by a pBPP.
We say a run $\alpha_0 \alpha_1 \cdots$ of a pBPP $\S = (\Gamma,\mathord{\btran{}},\Prob)$ \emph{covers} a configuration $\phi \in \N^\Gamma$,
 if $\alpha_i \ge \phi$ holds for some $i \in \N$.
The coverability problem asks whether it is almost surely the case that some configuration from a finite set $\{\phi_1, \ldots, \phi_n\}$ will be covered.
More formally, the coverability problem is the following.
Given a pBPP $\S = (\Gamma,\mathord{\btran{}},\Prob)$, an initial configuration $\alpha_0 \in \N^\Gamma$,
 and finitely many configurations $\phi_1, \ldots, \phi_n$,
 does $\Pt(\alpha_0 \models \Diamond F) = 1$ hold, where $F = \phi_1 \up \cup \ldots \cup \phi_n \up$?
Similarly, does $\Pp(\alpha_0 \models \Diamond F) = 1$ hold?
We will show that those two questions always have the same answer.

In Section~\ref{sub-markov-decidability} we show that the coverability problem is decidable.
In Section~\ref{sub-markov-lower-bound} we show that the complexity of the coverability problem is nonelementary.

\subsection{Decidability} \label{sub-markov-decidability}

For deciding the coverability problem we use the approach of~\cite{AbdullaHM07}.
The following proposition is crucial for us:

\begin{proposition} \label{prop-Abdulla}
Let $\DTMC = (Q,\delta)$ be a Markov chain and $F \subseteq Q$ such that $\DTMC$ is globally coarse with respect to~$F$.
Let $\bar{F} = Q \setminus F$ be the complement of~$F$
 and let $\widetilde{F} := \{s \in Q \mid \P(s \models \Diamond F) = 0\} \subseteq \bar{F}$
 denote the set of states from which $F$ is not reachable in~$\DTMC$.
Let $s_0 \in Q$.
Then we have $\P(s_0 \models \Diamond F) = 1$ if and only if $\P(s_0 \models \bar{F} \U \widetilde{F}) = 0$.
\end{proposition}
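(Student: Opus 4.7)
My plan is to prove the two directions of the biconditional separately. The $(\Rightarrow)$ direction is essentially by definition of $\widetilde{F}$; the $(\Leftarrow)$ direction is where global coarseness enters in an essential way.

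For $(\Rightarrow)$, I would show that the events $\{s_0 \models \Diamond F\}$ and $\{s_0 \models \bar{F} \U \widetilde{F}\}$ are disjoint up to null sets. Any run witnessing the latter enters some state $s_j \in \widetilde{F}$ after visiting only states in $\bar{F}$; since $\P(s_j \models \Diamond F) = 0$ by the definition of $\widetilde{F}$, almost every such run never reaches $F$ at all. Hence $\P(s_0 \models \bar{F} \U \widetilde{F}) \le 1 - \P(s_0 \models \Diamond F)$, and the assumption $\P(s_0 \models \Diamond F) = 1$ forces this to be $0$.

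For $(\Leftarrow)$, I would argue the contrapositive: assume $\P(s_0 \models \Diamond F) < 1$ and conclude $\P(s_0 \models \bar{F} \U \widetilde{F}) > 0$. Setting $\pi(s) := \P(s \models \Diamond F)$, global coarseness gives the dichotomy $\pi(s) \in \{0\} \cup [c,1]$ for every $s \in Q$, with $\pi(s) = 0$ iff $s \in \widetilde{F}$. Let $B$ denote the event ``the run started at $s_0$ never hits $F$'', so that $\P(B) = 1 - \P(s_0 \models \Diamond F) > 0$. The crucial step is to show that on $B$ the run almost surely eventually enters $\widetilde{F}$. Since runs in $B$ remain in $\bar{F}$ throughout, the first such visit witnesses $\bar{F} \U \widetilde{F}$, giving $\P(s_0 \models \bar{F} \U \widetilde{F}) \ge \P(B) > 0$.

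To establish the crucial step I would use the bounded Doob martingale $M_n := \P(s_0 \models \Diamond F \mid s_0, \ldots, s_n)$, which converges almost surely to the indicator of $\{s_0 \models \Diamond F\}$. On $B$ one has $M_n = \pi(s_n)$ with limit $0$, and the coarseness gap then forces $\pi(s_n) = 0$ for all sufficiently large $n$, i.e.\ $s_n \in \widetilde{F}$. (Alternatively, one can avoid martingale machinery via a strong-Markov iteration: each visit to $W := \bar{F} \setminus \widetilde{F}$ carries conditional probability at least $c$ of subsequently reaching $F$, so the event ``visit $W$ infinitely often yet never hit $F$'' is null.) The main obstacle is precisely this step: without the coarseness gap, $\pi(s_n)$ could decay to $0$ through values in $(0,c)$ while $s_n$ never actually lands in $\widetilde{F}$; global coarseness is exactly what upgrades the asymptotic statement ``$\pi(s_n) \to 0$'' to the finite-time statement ``$s_n \in \widetilde{F}$ for some $n$''.
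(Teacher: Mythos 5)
Your proof is correct. Note, however, that the paper itself does not prove this proposition at all: it simply cites Lemmas 3.7, 5.1 and 5.2 of the Abdulla--Henda--Mayr paper, so your argument is a genuinely self-contained alternative rather than a reproduction of the paper's reasoning. Your forward direction is exactly right (and, as your write-up implicitly shows, needs no coarseness: it is just the Markov property applied at the first entry into $\widetilde{F}$, making $\{s_0 \models \bar{F} \U \widetilde{F}\}$ and $\{s_0 \models \Diamond F\}$ disjoint up to a null set). Your backward direction, via the bounded Doob martingale $M_n = \P(s_0 \models \Diamond F \mid s_0,\ldots,s_n) = \pi(s_n)$ on the event of never hitting $F$, together with the observation that the coarseness gap $\pi(s)\in\{0\}\cup[c,1]$ upgrades $\pi(s_n)\to 0$ to ``$s_n\in\widetilde{F}$ eventually'', is precisely the right use of global coarseness and is the same mechanism that underlies the cited attractor-style lemmas of Abdulla et al.\ for decisive/globally coarse Markov chains. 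One small remark on your parenthetical alternative: the strong-Markov iteration does work, but not by a naive geometric bound $(1-c)^k$ on the $k$-th visit to $W=\bar F\setminus\widetilde F$ (returns to $W$ without hitting $F$ need not have probability at most $1-c$); the clean way is to bound $\P(\text{never hit }F,\ T_k<\infty)\le(1-c)\,\P(\text{no hit up to }T_k,\ T_k<\infty)$ and let $k\to\infty$, which yields $\P(\text{never hit }F,\ W\text{ visited infinitely often})\le(1-c)\,\P(\text{never hit }F,\ W\text{ visited infinitely often})$ and hence that this event is null. With that caveat, both of your routes are valid, and the martingale/L\'evy version you give as the main argument is complete as stated.
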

\begin{proof}
Immediate from~\cite[Lemmas 3.7, 5.1 and 5.2]{AbdullaHM07}.
\qed
\end{proof}
In other words, Proposition~\ref{prop-Abdulla} states that $F$ is almost surely reached
 if and only if there is no path to~$\widetilde{F}$ that avoids~$F$.
Proposition~\ref{prop-Abdulla} will allow us to decide the coverability problem by computing only reachability relations in~$\DTMC$, ignoring the probabilities.

Recall that for a pBPP $\S = (\Gamma,\mathord{\btran{}},\Prob)$, the Markov chains $\Mt(\S)$ and $\Mp(\S)$ have the same structure;
 only the transition probabilities differ.
In particular, if $F \subseteq \N^\Gamma$ is upward-closed, the set $\widetilde{F}$, as defined in Proposition~\ref{prop-Abdulla},
 is the same for $\Mt(\S)$ and $\Mp(\S)$.
Moreover, we have the following proposition (full proof \iftechrep{in Appendix~\ref{app-markov-chain}}{in~\cite{BKL14-fossacs-report}}).
\newcommand{\stmtpropgloballycoarse}{
 Let $\S = (\Gamma,\mathord{\btran{}},\Prob)$ be a pBPP.
 Let $F \subseteq \N^\Gamma$ be upward-closed.
 Then the Markov chains $\Mt(\S)$ and $\Mp(\S)$ are globally coarse with respect to~$F$.
}
\begin{proposition} \label{prop-globally-coarse}
 \stmtpropgloballycoarse
\end{proposition}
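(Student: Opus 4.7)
The plan is to exploit the process-independence structure of BPPs in order to reduce covering from an arbitrary $\alpha$ to covering from a uniformly bounded sub-configuration. Writing the upward-closed target set as $F = \phi_1\up \cup \cdots \cup \phi_n\up$ (finitely many minimal elements, by Dickson) and setting $M := \max_i \Wp{\phi_i}$, the first ingredient is an extraction lemma: whenever $\alpha$ can reach $F$ in the underlying reachability graph of the pBPP, there exists $\alpha' \le \alpha$ with $\Wp{\alpha'} \le M$ that still reaches $F$. The proof views a derivation $\alpha \to^* \beta \ge \phi_i$ as a forest, with each process of $\alpha$ being the root of the subtree of its descendants and each process of $\beta$ a unique leaf. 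Choose a set of at most $\Wp{\phi_i}$ leaves that already covers $\phi_i$, keep only their ancestor roots (at most $\Wp{\phi_i}$ of them), and restrict the derivation to the corresponding sub-forest; this yields a derivation from some $\alpha' \le \alpha$ with $\Wp{\alpha'} \le \Wp{\phi_i} \le M$. Since $|\Gamma|$ is finite, only finitely many such $\alpha'$ exist, and each one that reaches $F$ has some finite shortest reachability witness.

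For $\Mt(\S)$ the plan is then to use a direct per-step bound: since $\Wt{\cdot} \le |\Gamma|$, every transition in $\Mt(\S)$ has probability at least $p_{\min}/|\Gamma|$, where $p_{\min}$ denotes the smallest rule probability in~$\S$. Applying the same rule sequence as a shortest witness $\alpha' \to^* \gamma' \ge \phi_i$ to the larger configuration $\alpha = \alpha' + (\alpha - \alpha')$ is always enabled (type counts can only grow after adding the surplus) and leaves the surplus processes untouched, so the resulting trajectory still reaches a configuration $\ge \phi_i$. Taking $L$ as the maximum shortest-witness length over the finitely many relevant small $\alpha'$, this gives the uniform bound $\Pt(\alpha \models \Diamond F) \ge (p_{\min}/|\Gamma|)^L$.

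For $\Mp(\S)$ the per-step bound fails, because $\deltap$ weights types by their process counts and the scheduling probability of a fixed type can be driven to zero by inflating $\alpha - \alpha'$. Instead, I would prove the monotonicity $\Pp(\alpha \models \Diamond F) \ge \Pp(\alpha' \models \Diamond F)$ by a coupling argument. Colour every process in the $\Mp(\alpha)$-chain as red (a descendant of some process in $\alpha'$) or blue (a descendant of some process in $\alpha - \alpha'$); since BPP rules act on one process at a time, colours are preserved by the chain. At every step, the uniform process selection, conditioned on picking a red process, is uniform over the current reds, and the rule applied is drawn from the pBPP distribution independently of scheduling. Therefore the red sub-configuration, observed at the subsequence of steps at which a red process is consumed, evolves with exactly the distribution of an $\Mp(\alpha')$-chain; whenever that induced chain covers $F$, the full $\Mp(\alpha)$-chain does too, because the total configuration dominates the red one and $F$ is upward-closed. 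Combining this with the extraction lemma gives $\Pp(\alpha \models \Diamond F) \ge \Pp(\alpha' \models \Diamond F) \ge c_p$, where $c_p$ is the positive minimum of $\Pp(\alpha' \models \Diamond F)$ over the finitely many relevant~$\alpha'$ that reach~$F$.

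The main obstacle is making the $\Mp$ coupling formally correct, and in particular verifying that the induced red sub-chain is genuinely Markovian with the $\Mp(\alpha')$-distribution, including the boundary behaviour when all reds die out before covering (the sub-chain then stabilises at $\varepsilon$, mirroring an $\Mp(\alpha')$-chain that has reached~$\varepsilon$). The $\Mt$-part is comparatively routine once the extraction lemma is in place; its only subtle point is that the length bound $L$ is finite, which follows because there are only finitely many relevant~$\alpha'$ and each has a finite shortest witness.
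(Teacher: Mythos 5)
Your overall strategy is sound, and for the $\Mp$ half it is essentially the paper's own argument in different clothing: your red/blue colouring is exactly the paper's cloning construction (it duplicates $\Gamma$ into $\Gamma\cup\Gamma'$ and primes the surplus $\alpha-\alpha'$), and your extraction lemma plays the role that Dickson's lemma plays there (the paper takes the finitely many minimal configurations with positive covering probability instead of your bounded-size sub-configurations; both yield a finite set over which to take the minimum). Your $\Mt$ argument is a pleasant self-contained alternative to the paper's citation of Abdulla et al.: since $\Wt{\beta}\le|\Gamma|$ for every configuration~$\beta$, each transition of a replayed witness has probability at least $p_{\min}/|\Gamma|$, the surplus never disables a rule, and the uniform bound $(p_{\min}/|\Gamma|)^L$ follows; this part is correct as it stands.

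The genuine gap is precisely the step you defer: the claim that the red sub-configuration, watched at the times a red process is consumed, has exactly the law of an $\Mp(\alpha')$-run. Per-step conditional uniformity over the reds is not enough; you must also show that, almost surely, a red process is eventually selected whenever reds are present. Otherwise the induced object may be a truncated run (reds present but starved forever), and the inequality $\Pp(\alpha\models\Diamond F)\ge\Pp(\alpha'\models\Diamond F)$ does not follow. This is not automatic, because blue processes can multiply without bound, so the probability of picking a red at a given step is not bounded away from~$0$; the boundary case you do discuss (reds dying out, mirroring $\varepsilon$) is the easy one, while the starvation case is the crucial one — indeed the paper singles it out as the heart of the $\Mp$ proof. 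The fix is short: with $a=\Wp{\alpha'}$, $b=\Wp{\alpha-\alpha'}$ and $z=\max_{X\btran{}\delta}\Wp{\delta}$ the maximal number of processes a single rule can create, the probability that no red is ever scheduled again is at most $\frac{b}{a+b}\cdot\frac{b+z}{a+b+z}\cdot\frac{b+2z}{a+b+2z}\cdots$, which equals~$0$ because the series $\sum_i \frac{a}{a+b+iz}$ diverges; applying this after every red step shows reds are scheduled infinitely often while nonempty. With that estimate added, your coupling is legitimate and the whole proof goes through.
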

\begin{proof}[sketch]
The statement about~$\Mt(\S)$ follows from~\cite[Theorem~4.3]{AbdullaHM07}.
For the statement about~$\Mp(\S)$ it is crucial to argue that starting with any configuration~$\alpha \in \N^\Gamma$
 it is the case with probability~$1$ that every type~$X$ with $\alpha(X) \ge 1$ is eventually scheduled.
Since $F$ is upward-closed it follows that for all $\alpha, \beta \in \N^\Gamma$ with $\alpha \le \beta$
 we have $\Pp(\alpha \models \Diamond F) \le \Pp(\beta \models \Diamond F)$.
Then the statement follows from Dickson's lemma.

For an illustration of the challenge, consider the pBPP with $X \btran{1} X X$ and $Y \btran{1} Y Y$, and let $F = X X \up$.
Clearly we have $\Pp(X \models \Diamond F) = 1$, as the $X$-process is scheduled immediately.
Now let $\alpha_0 = X Y$.
Since $\alpha_0 \ge X$, the inequality claimed above implies $\Pp(\alpha_0 \models F) = 1$.
Indeed, the probability that the $X$-process in~$\alpha_0$ is \emph{never} scheduled is at most $\frac12 \cdot \frac23 \cdot \frac 34 \cdot \ldots$,
 which is~$0$.
Hence $\Pp(\alpha_0 \models \Diamond F) = 1$.
\qed
\end{proof}

The following proposition 
 follows by combining Propositions \ref{prop-Abdulla}~and~\ref{prop-globally-coarse}.
\newcommand{\stmtpropqualreachequal}{
Let $\S = (\Gamma,\mathord{\btran{}},\Prob)$ be a pBPP.
Let $F \subseteq \N^\Gamma$ be upward-closed.
Let $\alpha_0 \in \N^\Gamma$.
We have: 
\begin{align*}
                          \Pt(\alpha_0 \models \Diamond F) = 1
  & \Longleftrightarrow\  \Pt(\alpha_0 \models \bar{F} \U \widetilde{F}) = 0 \\
  & \Longleftrightarrow\  \Pp(\alpha_0 \models \bar{F} \U \widetilde{F}) = 0
    \Longleftrightarrow\  \Pp(\alpha_0 \models \Diamond F) = 1
\end{align*}
}
\begin{proposition} \label{prop-qual-reach-equal}
 \stmtpropqualreachequal
\end{proposition}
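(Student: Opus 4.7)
The plan is to chain together Propositions~\ref{prop-Abdulla} and~\ref{prop-globally-coarse}, applied to each of the two Markov chains $\Mt(\S)$ and $\Mp(\S)$, and to bridge the two chains by a structural observation.

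First, since $F$ is upward-closed, Proposition~\ref{prop-globally-coarse} tells us that both $\Mt(\S)$ and $\Mp(\S)$ are globally coarse with respect to~$F$. Proposition~\ref{prop-Abdulla} then applies to each chain separately and gives directly the two outer equivalences: $\Pt(\alpha_0 \models \Diamond F) = 1 \Longleftrightarrow \Pt(\alpha_0 \models \bar{F} \U \widetilde{F}) = 0$, and symmetrically $\Pp(\alpha_0 \models \Diamond F) = 1 \Longleftrightarrow \Pp(\alpha_0 \models \bar{F} \U \widetilde{F}) = 0$.

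The remaining work is to establish the middle equivalence $\Pt(\alpha_0 \models \bar{F} \U \widetilde{F}) = 0 \Longleftrightarrow \Pp(\alpha_0 \models \bar{F} \U \widetilde{F}) = 0$, and this is purely graph-theoretic. From the definitions of $\deltat$ and $\deltap$ recalled in the preliminaries, we have $\deltat(\alpha,\gamma) = 0$ iff $\deltap(\alpha,\gamma) = 0$, so the two chains share the same underlying directed transition graph. In particular, the reachability-defined set $\widetilde{F} = \{\alpha \in \N^\Gamma \mid \P(\alpha \models \Diamond F) = 0\}$ is the same for both chains, as was already observed just before Proposition~\ref{prop-globally-coarse}. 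For a countable-state Markov chain, an event of the form ``reach $\widetilde{F}$ via a path staying in $\bar{F}$'' has positive probability if and only if such a finite path exists in the underlying graph; this condition is independent of the particular positive transition probabilities. Hence $\Pt(\alpha_0 \models \bar{F} \U \widetilde{F}) > 0$ iff $\Pp(\alpha_0 \models \bar{F} \U \widetilde{F}) > 0$, which yields the desired middle equivalence.

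Chaining the three equivalences completes the proof. The only point that requires any care — and it is essentially trivial — is the observation that the positivity of a reachability or until event in a Markov chain depends only on the chain's graph, so that the two chains $\Mt(\S)$ and $\Mp(\S)$, having identical graphs and identical $\widetilde{F}$, must agree on whether $\bar{F} \U \widetilde{F}$ has probability zero from~$\alpha_0$.
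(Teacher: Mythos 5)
Your proof is correct and follows essentially the same route the paper intends: the paper gives no separate argument beyond "combine Propositions~\ref{prop-Abdulla} and~\ref{prop-globally-coarse}", relying exactly on the facts you cite — that $F$ upward-closed makes both chains globally coarse, that $\widetilde{F}$ coincides for $\Mt(\S)$ and $\Mp(\S)$, and that positivity of $\bar{F} \U \widetilde{F}$ depends only on the common underlying transition graph (as recorded in the preliminaries). Nothing further is needed.
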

By Proposition~\ref{prop-qual-reach-equal} we may in the following omit the subscript from $\Pt,\Pp, \Mt, \Mp$ if it does not matter.
We have the following theorem.
\begin{theorem} \label{thm-qual-reach-decid}
The coverability problem is decidable:
given a pBPP $\S = (\Gamma,\mathord{\btran{}},\Prob)$, an upward-closed set $F \subseteq \N^\Gamma$, and a configuration $\alpha_0 \in \N^\Gamma$,
 it is decidable whether $\P(\alpha_0 \models \Diamond F) = 1$ holds.
\end{theorem}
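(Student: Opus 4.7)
By Proposition~\ref{prop-qual-reach-equal}, the problem reduces to a purely structural question on the graph underlying $\Mt(\S)$ (equivalently $\Mp(\S)$): is there no finite path $\alpha_0 \to \alpha_1 \to \cdots \to \alpha_k$ with $\alpha_i \notin F$ for $i < k$ and $\alpha_k \in \widetilde F$? Probabilities play no role anymore. My plan is therefore (i)~to compute a finite representation of $\widetilde F$, and (ii)~to decide forward reachability of $\widetilde F$ from $\alpha_0$ while staying in $\bar F$.

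For (i), observe that the pBPP is a well-structured transition system: rule firing is monotone with respect to the componentwise order on $\N^\Gamma$, and $F$ is upward-closed. Hence the set $U := \N^\Gamma \setminus \widetilde F$ of configurations from which $F$ is reachable is itself upward-closed, and a finite minimal basis $\psi_1, \ldots, \psi_m$ of $U$ can be computed by the standard backward coverability procedure used in~\cite{AbdullaHM07}. Membership in $\widetilde F$ is then decidable via $\alpha \in \widetilde F \iff \alpha \not\ge \psi_i$ for every $i$.

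For (ii), I would perform a Karp-Miller-style forward exploration of the transition graph rooted at $\alpha_0$: prune any branch as soon as it enters $F$, and whenever a descendant $\alpha$ componentwise dominates an ancestor $\alpha'$ on the same branch, set the strictly growing components to $\omega$. By Dickson's lemma the exploration terminates, and each resulting node $\gamma \in (\N \cup \{\omega\})^\Gamma$ represents a family of configurations genuinely reachable from $\alpha_0$ via $\bar F$. The coverability property fails iff some $\gamma$ has an instance in $\widetilde F$, which is a coordinatewise test against the basis from (i).

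The \textbf{main obstacle} is justifying the Karp-Miller acceleration under the $F$-avoidance constraint. A loop $\alpha' \to^* \alpha$ with $\alpha \ge \alpha'$ that itself avoids $F$ need not avoid $F$ when iterated: an intermediate configuration $\beta_0$ of the loop, combined with additional processes accumulated from previous iterations, might cross into some $\phi_i \up \subseteq F$ even though $\beta_0$ alone did not. The remedy is a cautious acceleration that promotes a component to $\omega$ only when pumping the loop arbitrarily many times keeps every intermediate configuration of the loop outside every $\phi_i \up$; the remaining components are kept finite. Termination by Dickson is unaffected, and soundness of the accelerated nodes as representatives of genuine $\bar F$-reachable configurations is restored. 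Combining this with step~(i) decides $\P(\alpha_0 \models \Diamond F) = 1$.
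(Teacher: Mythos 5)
Your reduction is set up correctly: using Proposition~\ref{prop-qual-reach-equal} to eliminate the probabilities, computing the minimal basis of $\N^\Gamma \setminus \widetilde{F}$ by backward coverability, and then asking for a path from $\alpha_0$ to $\widetilde F$ through $\bar F$ is exactly the paper's frame. The gap is in step~(ii). The paper does \emph{not} perform Karp--Miller acceleration at all: it builds the tree of $\bar F$-reachable configurations and simply \emph{stops expanding} any node that componentwise dominates an ancestor, introducing no $\omega$'s. The justification is Lemma~\ref{lem-qual-reach-decid}: a \emph{shortest} $\bar F$-path ending in a configuration $\le \gamma$ contains no pair $\alpha_i \le \alpha_j$ with $i<j$, because the surplus processes in $\alpha_j$ can be excised from the remainder of the run (each BPP rule consumes a single process, so a step either fires inside the retained part or inside the discarded part), and both $\bar F$ and $\widetilde F$ are downward-closed, so the trimmed path still avoids $F$ and still ends below a minimal witness $\gamma$ of $R \cap \widetilde F$. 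This is the key structural fact that makes plain pruning complete; your proposal never identifies it, and it is precisely what is needed.

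Your substitute -- "cautious" $\omega$-acceleration -- does not close the hole you yourself point out. First, the side condition "pumping the loop arbitrarily many times keeps every intermediate configuration outside every $\phi_i\up$" quantifies over unboundedly many iterations and you give no procedure to decide it, so the algorithm is not effective as described. Second, if acceleration may be \emph{refused} while the branch is neither pruned nor stopped, Dickson's lemma no longer bounds branch length: Dickson only guarantees that domination pairs occur, which helps only if domination triggers stopping or acceleration, so "termination by Dickson is unaffected" is unwarranted. Third, the final test "some node has an instance in $\widetilde F$" is not a correct criterion in either direction: Karp--Miller nodes cover the reachable configurations from \emph{above}, and $\widetilde F$ is downward-closed, so a node such as $(3,\omega)$ may have the instance $(3,0)\in\widetilde F$ while every genuinely reachable configuration it represents carries many processes in the $\omega$-coordinate and lies outside $\widetilde F$ (unsoundness); conversely the covering node of a true witness $\beta\in\widetilde F$ may have finite coordinates strictly larger than $\beta$'s, so that no instance of that node lies in $\widetilde F$ (incompleteness). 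Since enlarging configurations is never useful for reaching a downward-closed target while avoiding an upward-closed obstacle, the $\omega$-machinery is both unnecessary and misleading here; the argument you need is the domination-freeness of shortest witnessing paths, i.e.\ the paper's Lemma~\ref{lem-qual-reach-decid}, together with K\"onig's lemma for finiteness of the pruned tree.
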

\newcommand{\stmtlemqualreachdecid}{
 Let $\alpha_0 \in \bar{F}$ and let $\gamma \in \N^\Gamma$.
 Let $\alpha_0 \to \alpha_1 \to \ldots \to \alpha_k$ be a shortest path in~$\DTMC(\S)$
  such that $\alpha_0, \ldots, \alpha_k \in \bar{F}$ and $\alpha_k \le \gamma$.
 Then for all $i,j$ with $0 \le i < j \le k$ we have $\alpha_i \not\le \alpha_j$.
}
\begin{proof}
The complement of~$\widetilde{F}$ (i.e., the set of configurations from which $F$ is reachable) is upward-closed,
 and its minimal elements can be computed by a straightforward fixed-point
computation (this is even true for the more general model of pVASS, e.g., see
\cite[Remark 4.2]{AbdullaHM07}).
By Proposition~\ref{prop-qual-reach-equal} it suffices to decide whether $\P(\alpha_0 \models \bar{F} \U \widetilde{F}) > 0$ holds.
Define
$
 R := \{\alpha \in \bar{F} \mid \alpha \text{ is reachable from $\alpha_0$ via $\bar{F}$-configurations}\}.
$
Observe that $\P(\alpha_0 \models \bar{F} \U \widetilde{F}) > 0$ if and only if
$R \cap \widetilde{F} \ne \emptyset$.
We can now give a Karp-Miller-style algorithm for checking that
$R \cap \widetilde{F} \ne \emptyset$:
(i) Starting from~$\alpha_0$, build a tree of configurations reachable
from~$\alpha_0$ via $\bar{F}$-configurations (i.e., at no stage
$F$-configurations are added to this tree) --- for example, in a breadth-first
search manner --- but stop expanding a leaf node $\alpha_k$ as soon as we
discover that the branch
$\alpha_0 \to \cdots \to \alpha_k$ satisfies the following: $\alpha_j \le
\alpha_k$ for some $j < k$.
(ii) As soon as a node $\alpha \in \widetilde{F}$
is generated, terminate and output ``yes''.
(iii) When the tree construction is
completed without finding nodes in $\widetilde{F}$, terminate and output ``no''.

To prove correctness of the above algorithm, we first prove
\underline{termination}.
To this end, it suffices to show that the constructed
tree is finite. To see this, observe first that every branch in the constructed
tree is of finite length. This is an immediate consequence of Dickon's lemma
and our policy of terminating a leaf node $\alpha$ that satisfies $\alpha'
\le \alpha$, for some ancestor $\alpha'$ of $\alpha$ in this tree. Now
since all branches of the tree are finite, K\"{o}nig's lemma shows that the
tree itself must be finite (since each node has finite degree).

To prove \underline{partial correctness}, it suffices to show that the policy
of terminating a leaf node $\alpha$ that satisfies $\alpha'
\le \alpha$, for some ancestor $\alpha'$ of $\alpha$ in this tree, is
valid.
That is, we want to show that if $R \cap \widetilde{F} \neq \emptyset$
then a witnessing vector $\gamma \in R \cap \widetilde{F}$ will be found by the
algorithm.
We have the following lemma whose proof is \iftechrep{in Appendix~\ref{app-markov-chain}}{in~\cite{BKL14-fossacs-report}}.
\begin{lemma} \label{lem-qual-reach-decid}
 \stmtlemqualreachdecid
\end{lemma}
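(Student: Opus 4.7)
The plan is to argue by contradiction: suppose there exist indices $0 \le i < j \le k$ with $\alpha_i \le \alpha_j$, and construct a strictly shorter path from~$\alpha_0$ to some configuration $\alpha' \le \gamma$ that stays in~$\bar{F}$. This would contradict the minimality of the given path.

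The construction exploits the fact that a BPP transition consumes exactly one process, so the sub-path $\alpha_j \to \alpha_{j+1} \to \cdots \to \alpha_k$ induces a \emph{derivation forest}~$\mathcal{T}$: the roots are the processes in~$\alpha_j$, internal nodes correspond to fired rule applications (one per step), and the leaves are the processes in~$\alpha_k$. Writing $\delta = \alpha_j - \alpha_i \ge 0$, I partition $\mathcal{T}$ into $\mathcal{T}_i$ (trees rooted at an $\alpha_i$-process) and $\mathcal{T}_\delta$ (trees rooted at a $\delta$-process). The new path keeps the prefix $\alpha_0 \to \cdots \to \alpha_i$ unchanged and then, starting from $\alpha_i$, fires exactly the rule applications in~$\mathcal{T}_i$, in the same relative order as in the original sub-path (simply skipping those in $\mathcal{T}_\delta$).

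To verify validity, let $\mu_0 = \alpha_j, \mu_1 = \alpha_{j+1}, \ldots, \mu_{k-j} = \alpha_k$ denote the original intermediate configurations, and define $\mu'_0 = \alpha_i$, with $\mu'_t = \mu'_{t-1}$ whenever the $t$-th original rule lies in~$\mathcal{T}_\delta$, and otherwise $\mu'_t = \mu'_{t-1} - X + \eta$ for the $\mathcal{T}_i$-rule $X \to \eta$ at step~$t$. The heart of the argument is the invariant $\mu'_t \le \mu_t$, proved by induction on~$t$: the difference $\mu_t - \mu'_t$ is precisely the multiset of currently live $\mathcal{T}_\delta$-descendants and so is nonnegative. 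The invariant yields immediately (i) that each $\mathcal{T}_i$-rule $X \to \eta$ is applicable in the new path, because the specific process being fired is a living descendant of some $\alpha_i$-root and therefore ensures $\mu'_{t-1}(X) \ge 1$, and (ii) that each $\mu'_t$ lies in~$\bar{F}$, since $\bar{F}$ is downward-closed (as $F$ is upward-closed) and $\mu_t = \alpha_{j+t} \in \bar{F}$. Finally, the new path ends at $\mu'_{k-j} \le \mu_{k-j} = \alpha_k \le \gamma$, and its length is $i$ plus the number of internal nodes of~$\mathcal{T}_i$, which is at most $i + (k-j) < k$ since $i < j$. This contradicts the minimality assumption.

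The main obstacle I expect is formalising the derivation forest cleanly enough that phrases such as ``the process fired at step~$t$'' and ``its currently live descendants'' have unambiguous meanings and rigorously justify the applicability step above; once that bookkeeping is in place, the invariant $\mu'_t \le \mu_t$ and the resulting length and downward-closure consequences are routine.
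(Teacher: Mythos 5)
Your proof is correct and essentially the same as the paper's: both keep the prefix up to $\alpha_i$ and then replay only those rule firings of the suffix $\alpha_j \to \cdots \to \alpha_k$ that are attributable to the $\alpha_i$-part, using downward-closedness of $\bar{F}$ and $i<j$ to obtain a strictly shorter witnessing path. The paper just realises your ``live $\mathcal{T}_i$-descendants'' bookkeeping directly as an inductive splitting $\alpha_\ell = \alpha'_\ell + \beta_\ell$ with the fired process charged to whichever summand contains it, which is the same decomposition without the explicit derivation forest.
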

Let $R \cap \widetilde{F} \neq \emptyset$
 and let $\gamma \in \N^\Gamma$ be a minimal element of $R \cap \widetilde{F}$.
By Lemma~\ref{lem-qual-reach-decid} our algorithm does not prune any shortest path
 from~$\alpha_0$ to~$\gamma$.
Hence it outputs ``yes''.
\qed
\end{proof}

\subsection{Nonelementary Lower Bound} \label{sub-markov-lower-bound}

We have the following lower-bound result:

\newcommand{\stmtthmnonelementary}{
 The complexity of the coverability problem is nonelementary.
}
\begin{theorem} \label{thm-nonelementary}
 \stmtthmnonelementary
\end{theorem}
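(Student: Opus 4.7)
The plan is to reduce from a problem already known to require nonelementary complexity -- a natural candidate is the reachability or coverability problem for a counter machine whose reachable counter values can become as large as a tower of exponentials in the input size, or the emptiness problem for some variant of alternating/expansive systems. Given such an instance, I would construct a pBPP~$\S$, an initial configuration $\alpha_0 \in \N^\Gamma$, and an upward-closed target set $F = \phi\up$ so that $\P(\alpha_0 \models \Diamond F) = 1$ if and only if the source instance accepts.

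First I would set up the encoding. Process types split into three roles: \emph{control} types marking the current instruction, \emph{counter} types whose multiplicity represents counter values, and \emph{witness} types used to expose faulty simulation steps. Each machine instruction is translated into one or a handful of BPP rules: an increment spawns a counter process, a decrement removes one, a branch fires several control processes from which only one survives via a probabilistic rule choice. Because every enabled type is almost surely scheduled in the induced Markov chain (a fact already exploited in the proof of Proposition~\ref{prop-globally-coarse}), the simulation advances without needing an external adversary.

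The crux of the reduction is the classical obstacle that BPPs cannot zero-test. Here the probability-$1$ semantics serves as a substitute: I would design the rules so that any cheating step -- for example following a ``zero'' branch while the counter still contains positive processes -- triggers with positive probability a witness configuration from which $F$ is unreachable, that is, a state in~$\widetilde{F}$ in the sense of Proposition~\ref{prop-Abdulla}. By Proposition~\ref{prop-qual-reach-equal} this prevents almost-sure coverage unless the simulation is faithful, so the entire probabilistic reasoning is offloaded to the plain reachability structure of the Markov chain; the reduction does not need to manipulate concrete probabilities at all.

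The nonelementary blow-up itself would come from a layered gadget construction. A level-$0$ gadget realises a counter with a constant bound; a level-$(k{+}1)$ gadget composes two copies of the level-$k$ gadget, using witness types to force the second copy to cycle only after the first one has exhausted its budget, thereby squaring or exponentiating the available bound. Polynomially many levels then yield a counter whose value can be a tower of exponentials of height polynomial in the input, inside a pBPP whose description remains polynomial. The main obstacle I expect is precisely the correctness of this level-stacking under interleaved random scheduling: one must argue that the ``catch cheating'' mechanism continues to function \emph{locally} at every level and that no spurious interaction between types from different levels lets $F$ be almost-surely covered without a genuine accepting computation of the source machine. Reducing everything via Proposition~\ref{prop-qual-reach-equal} to a combinatorial reachability property in the underlying graph, and then reasoning inductively on the level structure, should make this feasible but will be the most delicate part of the argument.
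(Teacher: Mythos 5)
There is a genuine gap, and it lies exactly where the reduction has to do its work: in how faithfulness of the simulation is enforced inside a \emph{purely probabilistic} Markov chain. You propose to let the simulation ``advance'' under the random scheduling and to punish a cheating step (e.g.\ taking a zero branch with counter processes present) by sending the run, with positive probability, into $\widetilde{F}$. But in the Markov chain there is no controller: every structurally enabled transition, including every cheating transition, is taken with positive probability on every instance. If a cheating step can lead into $\widetilde{F}$ while avoiding $F$, then such a finite path exists and has positive probability regardless of whether the source machine accepts, so by Proposition~\ref{prop-Abdulla} you get $\P(\alpha_0 \models \Diamond F) < 1$ even on positive instances; and if cheating steps cannot reach $\widetilde{F}$, then nothing enforces faithfulness at all. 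Your mechanism is inverted: unfaithful steps must be punished by falling \emph{into} the upward-closed set $F$ itself (so that they disqualify the witnessing path), not by reaching $\widetilde{F}$.

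The paper's proof makes precisely this switch. Via Proposition~\ref{prop-qual-reach-equal}, $\P(\alpha_0 \models \Diamond F) = 1$ fails iff there exists a single \emph{nondeterministically chosen} path from $\alpha_0$ to $\widetilde{F}$ that stays in $\bar{F}$ (probabilities are irrelevant, and $\widetilde{F}$ is then replaced by a downward-closed goal $G$ via an extra gadget). The machine's computation is encoded as that path, and $F$ plays the role of a \emph{constraint} set: incompatibility constraints such as $XY\up$ and uniqueness constraints such as $XX\up$ forbid unfaithful interleavings and substitute for zero tests, which is why the reduction yields $\P = 1$ iff the machine does \emph{not} terminate (your stated polarity, $\P = 1$ iff acceptance, does not match what this kind of argument can deliver without first complementing the machine). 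Your layered producer/consumer idea for building tower-sized bounds is in the same spirit as the paper's $k$-loop/$2^k$-producer/$2^k$-consumer hierarchy, but without recasting the problem as constrained non-probabilistic reachability with $F$ as the forbidden region, the gadgets cannot be made to interact correctly under random scheduling, so the core of the argument is missing.
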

The proof is technically involved.
\begin{proof}[sketch]
We claim that there exists a nonelementary function $f$ such that given a
2-counter machine $M$ running in space $f(k)$, we can compute a pBPP
$\S = (\Gamma, \mathord{\btran{}}, \Prob)$ of size $\leq k$,
an upward-closed set $F \subseteq \N^\Gamma$
(with at most $k$ minimal elements, described by numbers at most~$k$),
and a type~$X_0 \in \Gamma$,
such that $\P(X_0 \models \Diamond F) = 1$ holds if and only if $M$ does not terminate.
Recall that by Proposition~\ref{prop-qual-reach-equal} we have that  $\P(X_0 \models \Diamond F) = 1$ is equivalent to
$\P(X_0 \models \bar{F} \U \widetilde{F}) = 0$.

Since the exact values of the probabilities do not matter,
 it suffices to construct a (nonprobabilistic) BPP~$\S$.
Further, by adding processes that can spawn everything (and hence cannot take part in $\widetilde{F}$-configurations)
one can change the condition of reaching
$\widetilde{F}$ to reaching a downward closed set $G \subseteq \bar{F}$.
So the problem we are reducing to is:
 does there exist a path in~$\S$ that is contained in $\bar{F}$ and goes
 from $X_0$ to a downward closed set~$G$.

By defining $F$ suitably we can add various restrictions on the
behaviour of our BPP. For example, the following example allows $X$ to be turned
into $Y$ if and only if there is no $Z$ present:

$$
X \btran{} YW \qquad W \btran{} \eps \qquad F = WZ \up
$$

Doubling the number of a given process is straightforward, and it is also
possible to divide the number of a given process by two.
Looking only at runs inside $\bar{F}$, the
following BPP can turn all its $X$-processes into half as many
$X'$-processes. (Note that more $X'$-processes could be spawned, but because
of the monotonicity of the system, the ``best'' runs are those that spawn a
minimal number of processes.)

\[
\begin{matrix}
X \btran{} TP \qquad T \btran{} \overline{P} \qquad
P \btran{} \eps \qquad \overline{P} \btran{} \eps \\[3mm]
P_1 \btran{} \overline{P_2} \qquad \overline{P_2} \btran{} P_2 \qquad
P_2 \btran{} \overline{P_1} \qquad \overline{P_1} \btran{} P_1 X' \\[3mm]
F = P\overline{P_1} \up ~\cup~ P\overline{P_2} \up ~\cup~
    \overline{P}P_1 \up ~\cup~ \overline{P}P_2 \up ~\cup~ T^2 \up \\[3mm]
\alpha_{\mathit{init}} = X^n \overline{P_1}
\end{matrix}
\]

Let us explain this construction. In order to make an $X$-process disappear,
we need to create temporary processes $P$ and $\overline{P}$. However, these
processes are incompatible, respectively, with $\overline{P_i}$ and $P_i$. Thus,
destroying an $X$-process requires the process $P_1$ to move into
$\overline{P_1}$ and then into $P_2$. By repeatedly destroying $X$-processes,
this forces the creation of half as many $X'$-processes.

It is essential for our construction to have a loop-gadget that performs a cycle of processes
$A \btran{} B \btran{} C \btran{} A$ exactly $k$ times (``$k$-loop''). By activating/disabling
transitions based on the absence/presence of an $A$-, $B$- or $C$-process, we can
force an operation to be performed $k$ times. For example, assuming the construction of a $k$-loop gadget, the following BPP
doubles the number of $X$-processes $k$ times:

\[
\begin{matrix}
X \btran{} Y \qquad Y \btran{} ZZ \qquad Z \btran{} X \\[3mm]
\text{(rules for $k$-loop on $A/B/C$)} \\[3mm]
F = XB \up ~\cup~ YC \up ~\cup~ ZA \up
\end{matrix}
\]

For the loop to perform $A \btran{} B$, all $X$-processes
have to be turned into $Y$. Similarly, performing
$B \btran{} C \btran{} A$ requires the $Y$-processes to be turned into $Z$,
then into $X$. Thus, in order to perform one iteration of the loop, one needs
to double the number of $X$-processes.

To implement such a loop we need two more gadgets:
 one for creating $k$ processes, and one for consuming $k$ processes.
By turning a created process into a consumed process on at a time, we obtain the required cycle.
Here is an example:

\[
\begin{matrix}
I \btran{} A \qquad A \btran{} B \qquad B \btran{} C \qquad C \btran{} \eps
\\[3mm]
\overline{A} \btran{} \overline{B} \qquad \overline{B} \btran{} \overline{C}F
\qquad \overline{C} \btran{} A
\\[3mm]
(\text{rules for a gadget to consume $k$ processes $F$})
\\[3mm]
(\text{rules for a gadget to spawn $k$ processes $I$})
\\[3mm]
F = A\overline{A} \up ~\cup~ B\overline{B} \up ~\cup~ C\overline{C} \up
~\cup~ AA \up ~\cup~ BB \up ~\cup~ CC \up
\\[3mm]
\alpha_{\mathit{init}} = \overline{A}
\end{matrix}
\]

By combining a $k$-loop with a multiplier or a divider, we can spawn or
consume $2^k$ processes.
This allows us to create a $2^k$-loop.
By iterating this construction, we get a BPP of exponential size (each loop
requires two lower-level loops) that is able to spawn or consume $2^{2^{...^k}}$ processes.

It remains to simulate our 2-counter machine~$M$. The main idea is to spawn
an initial budget $b$ of processes, and to make sure that this number stays
the same along the run. Zero-tests are easy to implement; the
difficulty lies in the increments and decrements. The solution is to maintain,
for each simulated counter, two pools of processes $X$ and $\overline{X}$,
such that if the counter is supposed to have value $k$, then we have
processes $X^k \overline{X}^{b-k}$.
Now, incrementing consist in turning all these
processes into backup processes, except one $\overline{X}$-process. Then, we turn
this process into an $X$-process, and return all backup process to their
initial type.

\iftechrep{Appendix~\ref{sec-complexity} provides}{In~\cite{BKL14-fossacs-report} we provide}
complete details of the proofs, including graphical representations of the processes involved.
\qed
\end{proof}

\section{The Coverability Problem for the MDP} \label{sec-mdp}

In the following we investigate the controlled version of the pBPP model.
Recall from Section~\ref{sec-preliminaries} that a pBPP $\S = (\Gamma,\mathord{\btran{}},\Prob)$ induces an MDP $\D(\S)$
 where in a configuration $\varepsilon \ne \alpha \in \N^\Gamma$ a scheduler~$\sigma$ picks a type~$X$ with $\alpha(X) \ge 1$.
The successor configuration is then obtained randomly from~$\alpha$ according to the rules in~$\S$ with $X$ on the left-hand side.

We investigate (variants of) the decision problem that asks, given $\alpha_0 \in \N^\Gamma$ and an upward-closed set~$F \subseteq \N^\Gamma$,
 whether $\P_\sigma(\alpha_0 \models \Diamond F) = 1$ holds for some scheduler (or for all schedulers, respectively).

\subsection{The Existential Problem} \label{sub-existential}
In this section we consider the scenario where we ask for a scheduler
 that makes the system reach an upward-closed set with probability~$1$.
We prove the following theorem:
\newcommand{\stmtthmmdpexistential}{
 Given a pBPP~$\S = (\Gamma,\mathord{\btran{}},\Prob)$ and a configuration $\alpha_0 \in \N^\Gamma$ and an upward-closed set~$F \subseteq \N^\Gamma$,
  it is decidable whether there exists a scheduler~$\sigma$ with $\P_\sigma(\alpha_0 \models \Diamond F) = 1$.
 If such a scheduler exists, one can compute a deterministic and memoryless scheduler~$\sigma$ with $\P_\sigma(\alpha_0 \models \Diamond F) = 1$.
}
\begin{theorem} \label{thm-mdp-existential}
 \stmtthmmdpexistential
\end{theorem}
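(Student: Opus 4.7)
My plan is to solve the existential almost-sure coverability problem by reducing it, using the monotonicity of pBPP MDPs, to a qualitative two-player game on $\D(\S)$ whose winning region for Scheduler is upward closed and hence finitely representable; a deterministic memoryless witness scheduler is then read off from the minimal elements of this region.

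First I would show that $\mathit{Wins}(F) := \{\alpha \in \N^\Gamma : \exists\sigma.~\P_\sigma(\alpha\models\Diamond F)=1\}$ is upward closed: given a winning scheduler~$\sigma$ from~$\alpha$, the scheduler that mimics~$\sigma$ from any $\alpha' \geq \alpha$ while leaving the additional processes $\alpha'-\alpha$ permanently idle also wins, since~$F$ is upward closed. By Dickson's lemma, $\mathit{Wins}(F)$ has finitely many minimal elements. Next I would compute $\mathit{Reach}(F) := \{\alpha : F \text{ is reachable from } \alpha \text{ in } \D(\S)\}$, which is upward closed and whose minimal elements are effectively computable by the same fixed point used in the proof of Theorem~\ref{thm-qual-reach-decid}. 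The key reformulation is this: by a scheduler-parametric adaptation of Propositions~\ref{prop-Abdulla}~and~\ref{prop-globally-coarse}, for any deterministic memoryless~$\sigma$ we have $\P_\sigma(\alpha_0\models\Diamond F)=1$ iff in the Markov chain induced by~$\sigma$ every configuration reachable from~$\alpha_0$ can still reach~$F$ under~$\sigma$. Hence $\mathit{Wins}(F)$ coincides with Scheduler's winning region in the qualitative two-player game on~$\D(\S)$ in which Scheduler picks the type, ``Probability'' picks the rule outcome, and Scheduler's objective is to keep $\sigma$-dead ends unreachable.

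I would then compute $\mathit{Wins}(F)$ by a backward induction on upward-closed subsets of $\mathit{Reach}(F)$, namely a two-level fixed point (outer greatest, inner least) that captures the B\"uchi-like condition ``Scheduler can keep the play in the region while repeatedly retaining the option to progress toward~$F$''. Each intermediate set is upward closed --- the monotone controllable-predecessor operator preserves this property --- and is therefore finitely represented by its minimal elements, so each iteration is effective. From the final antichain of minimal elements of $\mathit{Wins}(F)$, one witnessing type~$X \in \En(\psi)$ at each minimal element~$\psi$ lifts by monotonicity to a deterministic memoryless scheduler~$\sigma$ on all of~$\mathit{Wins}(F)$ (on any $\alpha \geq \psi$ choose the $X$ associated with a minimal $\psi \leq \alpha$), which by construction achieves $\P_\sigma(\alpha_0\models\Diamond F)=1$.

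The main obstacle is \emph{termination} of this fixed-point iteration, since decreasing chains of upward-closed subsets of~$\N^\Gamma$ do not stabilise in general. I expect to resolve this via a well-structured-game argument in the spirit of~\cite{Raskin2005}, exploiting the monotone structure of~$\D(\S)$ together with a uniform short-witness bound on the coverability-distance of any element of~$\mathit{Reach}(F)$ to~$F$ that is derivable from BPP coverability. The complementary technical ingredient is the scheduler-parametric globally-coarse statement needed for the reformulation above, which I would establish by extending the argument behind Proposition~\ref{prop-globally-coarse} to track transitions under a fixed memoryless~$\sigma$.
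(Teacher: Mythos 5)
Your plan hinges on two claims that do not hold (or are not established), and they are exactly where the real difficulty of this theorem lies. First, the ``key reformulation'': for a deterministic memoryless~$\sigma$ it is \emph{not} true that $\P_\sigma(\alpha_0 \models \Diamond F)=1$ iff no $\sigma$-reachable configuration is a $\sigma$-dead end, and the induced Markov chain need not be globally coarse. Take the pBPP of Example~\ref{ex-gambler-ruin} with $p>0.5$, $F = YZ\up$, and the memoryless scheduler that schedules~$Z$ whenever at least two $Z$-processes are present and schedules~$X$ when exactly one $Z$ is present. From $X Z^n$ the number of $Z$'s performs a random walk biased upwards, so $\P_\sigma(XZ^n \models \Diamond F) = \bigl(\tfrac{1-p}{p}\bigr)^{n-1}$, which is positive but tends to~$0$ (no global coarseness), and from $XZ^2$ every $\sigma$-reachable configuration can still reach~$F$ under~$\sigma$ although $\P_\sigma(XZ^2\models\Diamond F)<1$. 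So the ``scheduler-parametric adaptation of Propositions~\ref{prop-Abdulla} and~\ref{prop-globally-coarse}'' you rely on is false for arbitrary memoryless schedulers, and consequently the identification of the almost-sure winning set with the winning region of a purely qualitative two-player game (a $\nu\mu$-style condition) is unsound in this infinite-state setting --- this is precisely the gambler's-ruin phenomenon that makes infinite-state almost-sure reachability probability-sensitive in general; that it is \emph{not} sensitive for pBPPs is the content of the theorem and needs a pBPP-specific argument restricted to suitably chosen strategies, not a claim about all memoryless ones.

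Second, even granting a correct game characterization, your computation of $\mathit{Wins}(F)$ by an outer \emph{greatest} fixed point over upward-closed sets has no termination argument: upward-closed subsets of $\N^\Gamma$ admit infinite strictly decreasing chains (e.g.\ $\{\alpha \mid \alpha(X)\ge n\}$), so Dickson's lemma does not apply, and the ``uniform short-witness bound'' you hope to import is unsubstantiated --- note that the paper's own procedure is not even known to be elementary, since it invokes Petri-net reachability, which makes a cheap witness bound implausible. The paper takes a different route that supplies exactly the missing ingredients: it proves that a type $X$ with more than $K$ processes (where $K$ bounds the entries of the minimal elements of~$F$) can always be scheduled without risk, defines saturated/stable configurations and a ``cautious'' strategy under which a stable configuration is reached almost surely, abstracts $\D(\S)$ to a finite-state MDP over $\{0,\dots,K\}^\Gamma$ (where almost-sure reachability is probability-independent and memoryless deterministic schedulers are synthesizable), and computes the redirection targets $T_\alpha$ via Petri-net reachability. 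If you want to rescue your approach, you would need to prove the coarseness/equivalence statement for a restricted, explicitly constructed class of strategies (in effect re-deriving something like the cautious-strategy and saturation arguments) and replace the non-terminating greatest fixed point by a finitary abstraction; the lifting of per-minimal-element action choices to a memoryless scheduler on all of $\mathit{Wins}(F)$ would also need justification, since ``mimic the winning strategy of the sub-configuration'' is not memoryless as stated.
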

\begin{proof}[sketch]
The proof (\iftechrep{in Appendix~\ref{app-mdp}}{in~\cite{BKL14-fossacs-report}}) is relatively long.
The idea is to abstract the MDP~$\D(\S)$ (with $\N^\Gamma$ as state space) to an ``equivalent'' finite-state MDP.
The state space of the finite-state MDP is $Q := \{0, 1, \ldots, K\}^\Gamma \subseteq \N^\Gamma$,
 where $K$ is the largest number that appears in the minimal elements of~$F$.
For finite-state MDPs, reachability with probability~$1$ can be decided in polynomial time,
 and an optimal deterministic and memoryless scheduler can be synthesized.

When setting up the finite-state MDP, special care needs to be taken of transitions
 that would lead from~$Q$ to a configuration~$\alpha$ outside of~$Q$, i.e., $\alpha \in \N^\Gamma \setminus Q$.
Those transitions are redirected to a probability distribution on~$T_\alpha$ with $T_\alpha \subseteq Q$,
 so that each configuration in~$T_\alpha$ is ``equivalent'' to some configuration $\beta \in \N^\Gamma$
  that could be reached from~$\alpha$ in the infinite-state MDP~$\D(\S)$,
  if the scheduler follows a particular optimal strategy in~$\D(\S)$.
(One needs to show that indeed with probability~$1$ such a~$\beta$ is reached in the infinite-state MDP,
 if the scheduler acts according to this strategy.)
This optimal strategy is based on the observation that whenever in configuration $\beta \in \N^\Gamma$ with $\beta(X) > K$ for some~$X$,
 then type $X$ can be scheduled.
This is without risk, because after scheduling~$X$, at least $K$ processes of type~$X$ remain,
 which is enough by the definition of~$K$.
The benefit of scheduling such~$X$ is that processes appearing on the right-hand side of $X$-rules may be generated, possibly helping to reach~$F$.
For computing~$T_\alpha$, we rely on decision procedures for the reachability problem in Petri nets,
 which prohibits us from giving an upper complexity bound.
\qed
\end{proof}

\subsection{The Universal Problem} \label{sub-universal}

In this section we consider the scheduler as adversarial in the sense that it tries to avoid the upward-closed set~$F$.
We say ``the scheduler wins'' if it avoids~$F$ forever.
We ask if the scheduler can win with positive probability:
 given $\alpha_0$ and~$F$, do we have $\P_\sigma(\alpha_0 \models \Diamond F) = 1$ for all schedulers~$\sigma$?
For the question to make sense, we need to rephrase it, as we show now.
Consider the pBPP $\S = (\Gamma,\mathord{\btran{}},\Prob)$ with $\Gamma = \{X,Y\}$ and the rules $X \btran{1} X X$ and $Y \btran{1} Y Y$.
Let $F = X X \up$.
If $\alpha_0 = X$, then, clearly, we have $\P_\sigma(\alpha_0 \models \Diamond F) = 1$ for all schedulers~$\sigma$.
However, if $\alpha_0 = X Y$, then there is a scheduler~$\sigma$ with $\P_\sigma(\alpha_0 \models \Diamond F) = 0$:
 take the scheduler~$\sigma$ that always schedules~$Y$ and never~$X$.
Such a scheduler is intuitively \emph{unfair}.
If an operating system acts as a scheduler, a minimum requirement would be that waiting processes are scheduled eventually.

We call a run $\alpha_0 X_1 \alpha_1 X_2 \ldots$ in the MDP~$\D(\S)$ \emph{fair}
 if for all $i \ge 0$ and all $X \in \Gamma$ with $\alpha_i(X) \ge 1$ we have $X = X_j$ for some $j > i$.
We call a scheduler~$\sigma$ \emph{classically fair} if it produces only fair runs.
\begin{example}
 Consider the pBPP with $X \btran{1} Y$ and $Y \btran{0.5} Y$ and $Y \btran{0.5} X$.
 Let $F = Y Y \up$.
 Let $\alpha_0 = X X$.
 In configuration $\alpha = X Y$ the scheduler has to choose between two options:
 It can pick~$X$, resulting in the successor configuration $Y Y \in F$, which is a ``loss'' for the scheduler.
 Alternatively, it picks~$Y$, which results in $\alpha$ or~$\alpha_0$, each with probability~$0.5$.
 If it results in~$\alpha$, nothing has changed; if it results in~$\alpha_0$, we say a ``a round is completed''.
 Consider the scheduler~$\sigma$ that acts as follows.
 When in configuration $\alpha = X Y$ and in the $i$th round,
  it picks~$Y$ until either the next round (the $(i+1)$st round) is completed or $Y$ has been picked $i$ times in this round.
 In the latter case it picks~$X$ and thus loses.
 Clearly, $\sigma$ is classically fair (provided that it behaves in a classically fair way after it loses, for instances using round-robin).
 The probability of losing in the $i$th round is $2^{-i}$.
 Hence the probability of losing is $\P_\sigma(\alpha_0 \models \Diamond F) = 1 - \prod_{i=1}^\infty (1 - 2^{-i}) < 1$.
 (For this inequality, recall that for a sequence $(a_i)_{i \in \N}$ with $a_i \in (0,1)$ we have $\prod_{i \in \N} (1-a_i) = 0$ if and only if
  the series $\sum_{i \in \N} a_i$ diverges.)
 One can argue along these lines that any classically fair scheduler needs to play longer and longer rounds
  in order to win with positive probability.
 In particular, such schedulers need infinite memory.
\end{example}
It is hardly conceivable that an operating system would ``consider'' such schedulers.
Note that the pBPP from the previous example has a finite state space.

In the probabilistic context, a commonly used alternative notion is \emph{probabilistic fairness},
 see e.g.~\cite{Hart1983,Vardi85} or \cite{deAlfaro1999} for an overview (the term \emph{probabilistic fairness}
  is used differently in~\cite{deAlfaro1999}).
We call a scheduler~$\sigma$ \emph{probabilistically fair} if with probability~$1$ it produces a fair run.
\begin{example}
 For the pBPP from the previous example, consider the scheduler~$\sigma$ that picks~$Y$ until the round is completed.
 Then $\P_\sigma(\alpha \models \Diamond F) = 0$ and $\sigma$ is probabilistically fair.
\end{example}
The following example shows that probabilistic fairness for pBPPs can be unstable with respect to perturbations in the probabilities.
\begin{example} \label{ex-gambler-ruin}
 Consider a pBPP with
 \[
  X \btran{1} Y \qquad Y \btran{1} X Z \qquad Z \btran{p} Z Z \qquad Z \btran{1-p} \varepsilon \qquad \text{for some $p \in (0,1)$}
 \]
 and $F = Y Z \up$ and $\alpha_0 = X Z$.

 Let $p \le 0.5$.
 Then, by an argument on the ``gambler's ruin problem'' (see e.g.~\cite[Chapter XIV]{FellerVol1}),
 with probability~$1$ each $Z$-process produces only finitely many other $Z$-processes in its ``subderivation tree''.
 Consider the scheduler~$\sigma$ that picks~$Z$ as long as there is a $Z$-process.
 With probability~$1$ it creates a run of the following form:
 \[
   (X Z) \cdots (X) (Y) (X Z) \cdots (X) (Y) (X Z) \cdots (X) (Y) (X Z) \ldots
 \]
 Such runs are fair, so $\sigma$ is probabilistically fair and wins with probability~$1$.

 Let $p > 0.5$.
 Then, by the same random-walk argument, with probability~$1$ some $Z$-process (i.e., at least one of the $Z$-processes created by~$Y$)
  produces infinitely many other $Z$-processes.
 So any probabilistically fair scheduler~$\sigma$ produces, with probability~$1$, a $Y$-process before all $Z$-processes are gone,
  and thus loses.

 We conclude that a probabilistically fair scheduler~$\sigma$ with $\P_\sigma(\alpha_0 \models \Diamond F) < 1$ exists if and only if $p \le 0.5$.
\end{example}
The example suggests that deciding whether there exists a probabilistically fair scheduler~$\sigma$ with $\P_\sigma(\alpha_0 \models \Diamond F) < 1$
 requires arguments on (in general) multidimensional random walks.
In addition, the example shows that probabilistic fairness is not a robust notion when the exact probabilities are not known.

We aim at solving those problems by considering a stronger notion of fair runs:
Let $k \in \N$.
We call a run $\alpha_0 X_1 \alpha_1 X_2 \ldots$ \emph{$k$-fair}
 if for all $i \ge 0$ and all $X \in \Gamma$ with $\alpha_i(X) \ge 1$ we have that $X \in \{X_{i+1}, X_{i+2}, \ldots, X_{k}\}$.
In words, if $\alpha_i(X) \ge 1$, the type~$X$ has to be scheduled within time~$k$.
We call a scheduler \emph{$k$-fair} if it produces only $k$-fair runs.

\newcommand{\stmtthmmdpuniversal}{
 Given a pBPP $\S = (\Gamma,\mathord{\btran{}},\Prob)$, an upward-closed set~$F$, a number $k \in \N$, and a configuration $\alpha_0 \in \N^\Gamma$,
  it is decidable whether for all $k$-fair schedulers~$\sigma$ we have $\P_\sigma(\alpha_0 \models \Diamond F) = 1$.
}
\begin{theorem} \label{thm-mdp-universal}
 \stmtthmmdpuniversal
\end{theorem}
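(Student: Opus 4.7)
To decide the universal problem I would negate it and ask whether there exists a $k$-fair scheduler $\sigma$ with $\P_\sigma(\alpha_0 \models \Diamond F) < 1$, reducing this stochastic question to a pair of non-probabilistic questions --- a safety game and a reachability question --- both solvable by well-quasi-order / Karp-Miller techniques in the spirit of~\cite{Raskin2005} and Theorem~\ref{thm-qual-reach-decid}. Throughout, I would work on the augmented state space $\N^\Gamma \times \{0, \ldots, k\}^\Gamma$, where the second component $\eta$ tracks, for each type, the number of steps since it was last scheduled; on this augmented state, $k$-fair schedulers can be taken memoryless without loss of generality.

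The first sub-problem is to compute the \emph{safe set} $\widetilde F^{\ast}$, namely the set of augmented configurations from which some memoryless $k$-fair scheduler $\tau$ achieves $\P_\tau(\Diamond F) = 0$. Because every rule of $\S$ fires with positive probability, $\P_\tau(\Diamond F) = 0$ is equivalent to the property that \emph{every} play consistent with $\tau$ stays in $\bar F$; hence $\widetilde F^{\ast}$ is exactly the winning region of the controller in the non-probabilistic $k$-fair safety game on the augmented BPP, where the controller picks the type to schedule and the adversary picks a rule. Decidability of this single-sided safety game follows by adapting the attractor-computation technique of~\cite{Raskin2005} for Petri-net games; the $k$-fairness constraint and the finiteness of the fairness counter integrate straightforwardly, yielding a finite representation of $\widetilde F^{\ast}$.

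Next I would generalise the globally-coarse property (Proposition~\ref{prop-globally-coarse}) to the augmented MDP: for every memoryless $k$-fair $\sigma$, the induced Markov chain is globally coarse with respect to $F$. Since the fairness component is finite, the well-quasi-ordering on $\N^\Gamma$ lifts and the argument based on~\cite[Theorem~4.3]{AbdullaHM07} transfers with only cosmetic changes. Combining this with Proposition~\ref{prop-Abdulla}, for every $k$-fair $\sigma$ one obtains $\P_\sigma(\Diamond F)=1$ iff $\P_\sigma(\bar F \U \widetilde F^{\ast}) = 0$: the subtle point is that if $\sigma$ has positive probability of entering a region from which it itself avoids $F$, then its memoryless restriction to that region is a memoryless $k$-fair scheduler witnessing membership in $\widetilde F^{\ast}$, which justifies replacing the scheduler-dependent set in Proposition~\ref{prop-Abdulla} by the uniform $\widetilde F^{\ast}$.

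Finally, since every rule has positive probability, the existence of a $k$-fair $\sigma$ with $\P_\sigma(\bar F \U \widetilde F^{\ast}) > 0$ is equivalent to the existence of a finite $k$-fair trajectory $\alpha_0 \to \cdots \to \alpha_m$ staying in $\bar F$ with $\alpha_m \in \widetilde F^{\ast}$: the scheduler simply follows this trajectory and then switches to the safety-winning strategy from $\alpha_m$. This reachability question is then decided by a Karp-Miller-style exploration of the augmented BPP from $(\alpha_0, \vec 0)$, restricted to $\bar F$-nodes and pruning any node dominated in the augmented quasi-order by one of its ancestors, exactly as in the proof of Theorem~\ref{thm-qual-reach-decid}. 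The main obstacle I anticipate is the globally-coarse step: lifting the lemma to the augmented MDP and rigorously arguing that $\widetilde F^{\ast}$ can play the role of the scheduler-dependent set of Proposition~\ref{prop-Abdulla}.
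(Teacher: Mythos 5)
Your overall architecture is the same as the paper's: augment the state with per-type age counters so that $k$-fairness becomes a safety condition, view the probabilistic choice as an adversarial rule-picker in a non-probabilistic safety game, compute the scheduler's safety-winning region $\widetilde F^{*}$ by a well-quasi-order fixpoint, characterise the existence of a bad $k$-fair scheduler as the existence of a finite path inside $\bar F$ from the initial (augmented) configuration to $\widetilde F^{*}$, and decide that reachability question with the Karp--Miller-style pruning of Theorem~\ref{thm-qual-reach-decid}. The ``if'' direction of your characterisation (follow the finite path, then switch to the safety strategy) is exactly the paper's argument.

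The genuine gap is in the universal (``only if'') direction, precisely at the step you flag as the main obstacle but then dismiss as transferring ``with only cosmetic changes''. The proof of Proposition~\ref{prop-globally-coarse} (and of the underlying result of Abdulla et al.) rests on monotonicity of the \emph{unconstrained} chain: the set of configurations from which $F$ is reachable is upward-closed, and a coupling/cloning argument gives $\P(\alpha \models \Diamond F) \le \P(\beta \models \Diamond F)$ for $\alpha \le \beta$, so a uniform bound follows from Dickson's lemma applied to the minimal elements. Once you fix a ($k$-fair, even memoryless) scheduler $\sigma$, both ingredients disappear: $\sigma$ may act completely differently on a larger configuration or on a different age vector, so neither the monotonicity of $\P_\sigma(\cdot \models \Diamond F)$ nor the upward-closedness of the set of states from which $F$ is $\sigma$-reachable is available, and no coupling argument is offered. (Your ``without loss of generality memoryless'' claim for the objective $\P_\sigma(\Diamond F)<1$ in an infinite-state MDP is likewise asserted, not proved, though it is the lesser issue.) As it stands, the application of Proposition~\ref{prop-Abdulla} to each induced chain is unjustified. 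The paper avoids per-scheduler coarseness altogether: it computes the adversary's winning region as an increasing sequence of upward-closed sets $W_0 \subseteq W_1 \subseteq \cdots$ stabilising at some $W = W_i$ by Dickson's lemma, and observes that from \emph{every} state of $W$ and under \emph{every} scheduler the bad set is hit within the next $i$ steps with probability at least $p_{\min}^{i}$, where $p_{\min}$ is the least rule probability; iterating this uniform bound yields probability $1$. So the fix is to discard the lifted coarseness lemma and instead extract the scheduler-independent bound directly from the stabilisation index of the very fixpoint you already compute.
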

The proof is inspired by proofs in~\cite{Raskin2005}, and combines new insights with the technique of Theorem~\ref{thm-qual-reach-decid},
 \iftechrep{see Appendix~\ref{app-mdp}}{see~\cite{BKL14-fossacs-report}}.
We remark that the proof shows that the exact values of the positive probabilities do not matter.

\section{$Q$-States Target Sets} \label{sec-ptime}

In this section, we provide a sensible restriction of input target sets which
yields polynomial-time solvability of our problems.
Let $Q = \{X_1, \ldots, X_n\} \subseteq \Gamma$.
The \defn{$Q$-states set} is the upward-closed set $F = X_1 \up \cup \ldots \cup X_n \up$.
There are two reasons to
consider $Q$-states target sets.
Firstly, $Q$-states target sets are sufficiently
expressive to capture common examples in the literature of distributed
protocols, e.g., freedom from deadlock and resource starvation (standard
examples include the dining philosopher problem in which case \emph{at least
one} philosopher must eat).
Secondly, $Q$-states target sets have been
considered in the literature of Petri nets:
e.g., in~\cite{AbdullaHM07}\footnote{Our definition seems
different from \cite{AbdullaHM07}, but
equivalent from standard embedding of Vector Addition Systems with States to Petri
Nets.} the authors showed that qualitative reachability for
probabilistic Vector Addition Systems with States with $Q$-states target
sets becomes decidable whereas the same problem is undecidable with
upward-closed target sets.

\begin{theorem} \label{thm-qual-reach-ptime}
Let $\S = (\Gamma,\mathord{\btran{}},\Prob)$ be a pBPP.
Let $Q \subseteq \Gamma$ represent an upward-closed set $F \subseteq \N^\Gamma$.
Let $\alpha_0 \in \N^\Gamma$ and $k \ge |\Gamma|$.
\begin{enumerate}
\item[(a)]
The coverability problem with $Q$-states target sets is solvable in polynomial
time; i.e., we can decide in polynomial time whether $\P(\alpha_0 \models \Diamond F) = 1$ holds.
\item[(b)]
We have: \vspace{-5mm}
\begin{align*}
                     \quad & \P(\alpha_0 \models \Diamond F) = 1 \\
 \Longleftrightarrow \quad & \P_\sigma(\alpha_0 \models \Diamond F) = 1 \text{ holds for some scheduler~$\sigma$} \\
 \Longleftrightarrow \quad & \P_\sigma(\alpha_0 \models \Diamond F) = 1 \text{ holds for all $k$-fair schedulers~$\sigma$.}
\end{align*}
As a consequence of part~(a), the existential and the $k$-fair universal problem are decidable in polynomial time.
\end{enumerate}
\end{theorem}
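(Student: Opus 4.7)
My plan is to establish both parts by characterizing almost-sure coverability through a single polynomial-time condition on $\text{supp}(\alpha_0)$, exploiting the independence of BPP sub-derivations. For part~(a), I would compute in polynomial time the set $T_F \subseteq \Gamma$ of types from which $F$ is reachable (fixed-point iteration seeded with $Q$) and then the set $\text{DieSafe} \subseteq \Gamma$ defined as the least $S$ with $\Gamma \setminus T_F \subseteq S$ closed under ``$X \in S$ whenever some $X \btran{} \beta$ satisfies $\text{supp}(\beta) \subseteq S$''. Proposition~\ref{prop-qual-reach-equal} reduces~(1) to non-reachability of $\widetilde{F}$ from $\alpha_0$ via $\bar{F}$, and the independence of BPP sub-derivations decomposes this further into the single-type question ``does $\{X\}$ reach $\widetilde{F}$ via $\bar{F}$?'' for each $X \in \text{supp}(\alpha_0)$, which by induction along the $\text{DieSafe}$ construction is exactly $X \in \text{DieSafe}$. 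Hence (1) $\Leftrightarrow \text{supp}(\alpha_0) \not\subseteq \text{DieSafe}$, decidable in polynomial time.

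For part~(b), let $r_X := \P(\{X\} \models \Diamond F)$ denote the Markov-chain coverability probability starting from a single type-$X$ process; applied to the singleton configuration $\{X\}$, part~(a) already gives $r_X = 1 \Leftrightarrow X \notin \text{DieSafe}$. The key observation is that the rule fired at each scheduling event is sampled independently from the distribution of the scheduled type, so the sub-derivation trees rooted at distinct initial processes are independent random variables whose distribution is scheduler-independent; only which of their nodes get realized depends on the scheduler.

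For every scheduler~$\sigma$ that fully realizes every such sub-tree --- in particular every $k$-fair~$\sigma$ with $k \geq |\Gamma|$ (deterministically), and the Markov chain as well (with probability~$1$, by the argument underlying the proof of Proposition~\ref{prop-globally-coarse} combined with a routine induction on the creation times of descendants) --- independence yields
\[
\P_\sigma(\alpha_0 \models \Diamond F) \;=\; 1 - \prod_{X \in \Gamma} (1 - r_X)^{\alpha_0(X)}.
\]
A pathwise coupling of the random rules (pre-sample, for every process that is ever created, the rule it would fire upon being scheduled) shows that any scheduler's realized sub-forest is contained in the fully expanded one, giving $\P_\sigma(\alpha_0 \models \Diamond F) \leq 1 - \prod_X (1 - r_X)^{\alpha_0(X)}$ for every~$\sigma$. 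Combining everything, each of (1),~(2),~(3) holds iff the product vanishes, iff some $X \in \text{supp}(\alpha_0)$ has $r_X = 1$, iff $\text{supp}(\alpha_0) \not\subseteq \text{DieSafe}$, which is polynomial-time decidable by part~(a).

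The main obstacle I anticipate is the clean bookkeeping for the equality ``$\P_\sigma = 1 - \prod (1-r_X)^{\alpha_0(X)}$'' for fully-realizing schedulers: one needs that every descendant of every initial process is eventually realized, not merely that every currently enabled type is eventually scheduled, and this extension of the argument in the proof of Proposition~\ref{prop-globally-coarse} to the descendants requires a careful induction on creation times that is compatible with the indistinguishability of processes of the same type in a BPP.
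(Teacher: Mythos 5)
Your part~(a) is, in spirit, the paper's own construction ($T_F$ is the paper's $Q'$, and $\mathit{DieSafe}$ is a nullability-style least fixed point), but the closure rule as you state it is wrong: it certifies that an $X$-process can eventually vanish, not that it can vanish \emph{while staying in $\bar F$}, which is what Proposition~\ref{prop-qual-reach-equal} requires. Concretely, take $\Gamma=\{A,B,C\}$, $Q=\{A\}$, and the rules $A \btran{1} \varepsilon$, $B \btran{1} B$, $C \btran{1} A$, with $\alpha_0 = C$. Then $T_F=\{A,C\}$, so the base of $\mathit{DieSafe}$ is $\{B\}$; the closure first adds $A$ (via $A\btran{}\varepsilon$, empty support) and then $C$ (via $C\btran{}A$), so $\mathit{DieSafe}=\{A,B,C\}\supseteq\mathrm{supp}(\alpha_0)$ and your criterion answers $\P(\alpha_0\models\Diamond F)<1$ --- yet the unique step from $C$ produces $A\in F$, so in fact $\P(C\models\Diamond F)=1$. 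The repair is small: restrict the closure to left-hand sides $X\notin Q$. Then $S$ never meets $Q$ (the base does not, since $Q\subseteq T_F$), so $\mathrm{supp}(\beta)\subseteq S$ also forbids $Q$-types on right-hand sides, and the resulting predicate is exactly the paper's check that $\varepsilon$ is derivable in the grammar obtained by deleting all rules whose left-hand side is in $Q$ or whose right-hand side meets $Q$. With this fix, your decomposition into per-type questions (valid here because sub-derivations of a BPP are independent and ``avoiding $F$'' for a $Q$-states set means ``no $Q$-process in any sub-derivation at any time'') gives the correct polynomial-time criterion, and also handles $\alpha_0\in F$ automatically since then $\mathrm{supp}(\alpha_0)\not\subseteq\mathit{DieSafe}$.

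Your part~(b) takes a genuinely different route from the paper. The paper argues purely qualitatively: if $\P(\alpha_0\models\Diamond F)<1$ there is a finite $F$-avoiding path into $\widetilde F$, which every scheduler permits with positive probability, so $\P_\sigma<1$ for all $\sigma$; conversely if $\P=1$ then along every $F$-avoiding history $F$ remains reachable, and any $k$-fair scheduler then hits $F$ almost surely by a standard bounded-probability argument. You instead exploit that for $Q$-states targets $\Diamond F$ is equivalent to ``some node of some pre-sampled sub-derivation tree is realized and has a $Q$-type'', obtaining the exact value $1-\prod_X(1-r_X)^{\alpha_0(X)}$ for the Markov chain and for fully-realizing schedulers, and the same expression as an upper bound for every scheduler via the pre-sampling coupling. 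This is sound (the simultaneity issues that would break it for general upward-closed $F$ do not arise here) and buys more than the paper's proof, namely a quantitative comparison of all schedulers with the Markov chain; the paper's argument is lighter on bookkeeping. Two points you should still make explicit: (i) to refute statement~(3) when the product is positive you need a $k$-fair scheduler to \emph{exist} (round-robin, which uses $k\ge|\Gamma|$), otherwise (3) could hold vacuously; and (ii) the claim that every $k$-fair scheduler fully realizes all sub-trees needs the within-type bookkeeping (e.g.\ oldest-first replacement, justified by exchangeability of same-type processes) that you flag yourself --- that is indeed where the remaining work lies, and it also underlies the equality for the Markov chain (via the argument in the proof of Proposition~\ref{prop-globally-coarse}). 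Finally, note that your part~(b) inherits the part~(a) fix, since it uses $r_X=1\Leftrightarrow X\notin\mathit{DieSafe}$.
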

\begin{proof}
Denote by $Q' \subseteq \Gamma$ the set of types $X \in \Gamma$ such that
 there are $\ell \in \N$,
 a path $\alpha_0, \ldots, \alpha_\ell$ in the Markov chain $\DTMC(\S)$,
 and a type $Y \in Q$
  such that $\alpha_0 = X$ and $\beta = \alpha_\ell \ge Y$.
Clearly we have $Q \subseteq Q' \subseteq \Gamma$, and $Q'$ can be computed in polynomial time.

In the following, view~$\S$ as a context-free grammar with empty terminal set
 (ignore the probabilities, and put the symbols on the right-hand sides in an arbitrary order).
Remove from~$\S$ all rules of the form: (i) $X \btran{} \alpha$ where
 $X \in Q$ or $\alpha(Y) \ge 1$ for some $Y \in Q$,
and (ii) $X \btran{} \alpha$ where
$X \in \Gamma \setminus Q'$. Furthermore, add rules $X \btran{} \varepsilon$ where
$X \in \Gamma \setminus Q'$.
Check (in polynomial time) whether in the grammar the empty word~$\varepsilon$ is produced by~$\alpha_0$.

We have that $\varepsilon$ is produced by~$\alpha_0$ if and only if $\P(\alpha_0 \models \Diamond F) < 1$.
This follows from Proposition~\ref{prop-qual-reach-equal},
 as the complement of~$\widetilde{F}$ is the $Q'$-states set.
Hence part~(a) of the theorem follows.

For part~(b), let $\P(\alpha_0 \models \Diamond F) < 1$.
By part~(a) we have that $\varepsilon$ is produced by~$\alpha_0$.
Then for all schedulers~$\sigma$ we have $\P_\sigma(\alpha_0 \models \Diamond F) < 1$.
Trivially, as a special case, this holds for some $k$-fair scheduler.
(Note that $k$-fair schedulers exist, as $k \ge |\Gamma|$.)

Conversely, let $\P(\alpha_0 \models \Diamond F) = 1$.
By part~(a) we have that $\varepsilon$ is not produced by~$\alpha_0$.
Then, no matter what the scheduler does, the set~$F$ remains reachable.
So all $k$-fair schedulers will, with probability~$1$, hit~$F$ eventually.
\qed
\end{proof}

\section{Semilinear Target Sets} \label{sec-semilinear}

In this section, we prove that the qualitative reachability
problems that we considered in the previous sections become undecidable
when we extend upward-closed to semilinear target sets.

\newcommand{\stmtthmundecidability}{
Let $\S = (\Gamma,\mathord{\btran{}},\Prob)$ be a pBPP.
Let $F \subseteq \N^\Gamma$ be a semilinear set.
Let $\alpha_0 \in \N^\Gamma$.
The following problems are undecidable:
\begin{itemize}
\item[(a)]
Does $\P(\alpha_0 \models \Diamond F) = 1$ hold?
\item[(b)]
Does $\P_\sigma(\alpha_0 \models \Diamond F) = 1$ hold for all $7$-fair schedulers~$\sigma$?
\item[(c)]
Does $\P_\sigma(\alpha_0 \models \Diamond F) = 1$ hold for some scheduler~$\sigma$?
\end{itemize}
}
\begin{theorem} \label{thm-undecidability}
\stmtthmundecidability
\end{theorem}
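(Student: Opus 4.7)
The plan is to reduce from the halting problem for deterministic 2-counter Minsky machines. Given a 2CM $M$, I would construct a pBPP $\S = (\Gamma, \mathord{\btran{}}, \Prob)$, an initial configuration $\alpha_0 \in \N^\Gamma$, and a semilinear target set $F \subseteq \N^\Gamma$ such that the answer to each of (a), (b), (c) encodes whether $M$ halts. The simulation uses counter types $C_1, C_2$ together with control types for each location of $M$. Increments are simulated by rules creating $C_i$-tokens; decrements by rules with input $C_i$ producing an acknowledgement token later consumed by the control. The only obstacle, as usual for BPP-like models, is the zero-test, which I handle by letting the control ``guess'' that a counter is zero and deposit a witness type $W_i$ at the moment of the guess.

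The semilinear target is
\begin{itemize}
  \item the \emph{halt set} of configurations with the control in the halt location and $C_1 = C_2 = 0$ (a single linear set involving equalities that are not expressible by an upward-closed condition);
  \item the \emph{cheat set} of configurations containing a witness $W_i$ together with at least one $C_i$-token (upward-closed, hence semilinear).
\end{itemize}
Their union $F$ is semilinear. For part (a), in the Markov chain every type with a token is almost surely scheduled (as in the proof of Proposition \ref{prop-globally-coarse}), so after a cheating guess any leftover $C_i$-token is almost surely fired, and together with $W_i$ this forces the run into the upward-closed cheat part of $F$. Faithful simulations reach $F$ precisely when $M$ halts, and have positive probability because all rule probabilities are positive. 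Hence $\P(\alpha_0 \models \Diamond F) = 1$ iff $M$ halts: in the non-halting case, faithful simulations of positive probability avoid $F$ forever; in the halting case, cheating runs reach the cheat set a.s.\ and faithful runs reach the halt set.

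For part (c), the scheduler takes over the guessing role, and a scheduler that commits to a correct faithful simulation is available iff $M$ halts; cheating schedulers are forced into the cheat set by the same almost-sure argument applied to the residual randomness of the rules. For part (b), I would pad each simulated instruction to use a fixed number $\le 7$ of pBPP transitions; a $7$-fair scheduler cannot then postpone scheduling a witness-relevant leftover $C_i$-token beyond one instruction, so the cheat-detection argument goes through for every $7$-fair scheduler.

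The main obstacle is the fine design of the zero-test guess: since any $C_i$-token can fire its rule at any moment, I must place the witness~$W_i$ so that it persists exactly during the ``dangerous window'' of a zero-test and is cleanly discharged once the control moves past, while ensuring that faithful simulations never accidentally produce a $W_i$-plus-$C_i$ configuration. The accounting of acknowledgement tokens must also rule out the case that a spurious $C_i$-process is masked by a subsequent increment. This mirrors the standard trick behind undecidability of reachability for non-deterministic BPPs with semilinear targets, but has to be adapted so that the randomness of the Markov chain (and the adversarial or cooperative behaviour of the scheduler in parts (b) and (c)) cannot evade detection.
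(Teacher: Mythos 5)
The central gap is the counter encoding. In a BPP every rule has a single process on its left-hand side, so your control type can never \emph{consume} an acknowledgement token, nor can it wait for one (there are no guards); the decrement-plus-acknowledgement mechanism you describe is exactly the synchronization that BPPs cannot express, and your ``main obstacle'' paragraph leaves precisely this unresolved. The paper's proof avoids consumption altogether: a counter is represented as the difference $X_+ - X_-$ of two token counts (a decrement merely \emph{adds} an $X_-$-process), each simulated step also spawns a short-lived guard symbol $(\theta_1,\theta_2)$, and the semilinear target $F$ contains every configuration in which a currently present guard symbol contradicts the value $X_+ - X_-$. A cheating step is therefore detected \emph{instantaneously}, by a Presburger condition on the very next configuration, with no reliance on which process happens to be scheduled later. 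Your cheat set only catches the deviation ``claims zero while some $C_i$ is present''; a run that takes the nonzero branch (or issues a decrement) while the counter is actually zero never meets your cheat set, never reaches your halt set, and so avoids $F$ forever.

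This breaks both directions of your claimed equivalence for part (a). Such wrong-branch runs occur with positive probability, so $\P(\alpha_0 \models \Diamond F) < 1$ even when $M$ halts; and when $M$ does not halt, your only $F$-avoiding runs are essentially the faithful infinite simulations, which form a set of measure zero (an infinite product of step probabilities bounded away from $1$), so you have no positive-probability witness for $\P(\alpha_0 \models \Diamond F) < 1$ either. The paper sidesteps this by reducing from the \emph{complement} of control-state reachability and by adding rules $Z \btran{} q_{bad}$ from every non-final type together with the clause ``$q_{bad}>0 \wedge q_F=0$'' in $F$, so that $F$ remains reachable until $q_F$ appears; then, by the globally-coarse characterization (Proposition~\ref{prop-Abdulla}), $\P(\alpha_0 \models \Diamond F) < 1$ holds iff there is a \emph{finite} $F$-avoiding path into the trap $\widetilde{F}$, and such a path is exactly a faithful finite simulation reaching $q_F$. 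Parts (b) and (c) of the paper are variations on this instantaneous-detection scheme (for (c), an auxiliary type $V$ lets the probabilistic choices punish an unfaithful scheduler); your sketches for (b) and (c) inherit the same encoding gap, so the proposal as it stands does not establish the theorem.
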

The proofs are reductions from the control-state-reachability problem for $2$-counter machines,
 \iftechrep{see Appendix~\ref{app-undecidability}}{see~\cite{BKL14-fossacs-report}}.

\section{Conclusions and Future Work}
\label{sec:conclusion}

In this paper we have studied fundamental qualitative coverability and other reachability properties for pBPPs.
For the Markov-chain model, the coverability problem for pBPPs is decidable, which is in contrast to general pVASSs.
We have also shown a nonelementary lower complexity bound.
For the MDP model, we have proved decidability of the existential and the $k$-fair version of the universal coverability problem.
The decision algorithms for the MDP model are not (known to be) elementary,
 as they rely on Petri-net reachability and a Karp-Miller-style construction, respectively.
It is an open question whether there exist elementary algorithms.
Another open question is whether the universal MDP problem without any fairness constraints is decidable.

We have given examples of problems where the answer depends on the exact probabilities in the pBPP.
This is also true for the reachability problem for finite sets:
Given a pBPP and $\alpha_0 \in \N^\Gamma$ and a \emph{finite} set $F \subseteq \N^\Gamma$,
 the reachability problem for finite sets asks whether we have $\P(\alpha_0 \models \Diamond F) = 1$ in the Markov chain~$\DTMC(\S)$.
Similarly as in Example~\ref{ex-gambler-ruin} the answer may depend on the exact probabilities:
 consider the pBPP with $X \btran{p} X X$ and $X \btran{1-p} \varepsilon$, and let $\alpha_0 = X X$ and $F = \{ X \}$.
Then we have $\P(\alpha_0 \models \Diamond F) = 1$ if and only if $p \le 1/2$.
The same is true in both the existential and the universal MDP version of this problem.
Decidability of all these problems is open, but clearly decision algorithms would have to use techniques
 that are very different from ours, such as analyses of multidimensional random walks.

On a more conceptual level we remark that the problems studied in this paper are qualitative in two senses:
 (a) we ask whether certain events happen with probability~$1$ (rather than $>0.5$ etc.); and
 (b) the exact probabilities in the rules of the given pBPP do not matter.
Even if the system is nondeterministic and not probabilistic,
 properties (a) and~(b) allow for an interpretation of our results in terms of nondeterministic BPPs,
 where the nondeterminism is constrained by the laws of probability,
 thus imposing a special but natural kind of fairness.
It would be interesting to explore this kind of ``weak'' notion of probability for other (infinite-state) systems.

\subsubsection*{Acknowledgment}
Anthony W. Lin did this work when he was an EPSRC research fellow at Oxford
University supported by grant number EP/H026878/1.

    \bibliographystyle{plain}
    \bibliography{references}
\iftechrep{
    \newpage
    \appendix
    \section{Proofs of Section~\ref{sec-markov-chain}} \label{app-markov-chain}

\begin{qproposition}{\ref{prop-globally-coarse}}
 \stmtpropgloballycoarse
\end{qproposition}
\begin{proof}
The statement about~$\Mt(\S)$ follows from~\cite[Theorem~4.3]{AbdullaHM07}.
For the statement about~$\Mp(\S)$ define
\[
 \Min = \{\alpha \in \N^\Gamma \mid \Pp(\alpha \models \Diamond F) > 0 \text{ and for all } \alpha' < \alpha : \Pp(\alpha' \models \Diamond F) = 0\}\,.
\]
Note that $\Min$ is finite (this follows from Dickson's lemma).
Define $c := \min_{\alpha \in \Min} \Pp(\alpha \models \Diamond F)$.
Let $\gamma \in \N^\Gamma$ with $\Pp(\gamma \models \Diamond F) > 0$.
We prove the proposition by showing $\Pp(\gamma \models \Diamond F) \ge c$.

Take $\Gamma_\bullet := \Gamma \cup \Gamma'$ where $\Gamma' = \{X' \mid X \in \Gamma\}$ is a copy of~$\Gamma$.
Similarly, we clone the rules so that we get $\mathord{\btran{}_\bullet} \subseteq (\Gamma \times \N^\Gamma) \cup (\Gamma' \times \N^{\Gamma'})$
 and define $\Prob_\bullet$ in the obvious way.
Let $\S_\bullet = (\Gamma_\bullet,\mathord{\btran{}_\bullet},\Prob_\bullet)$.
Let $\P_\bullet$ denote the probability measure of $\Mp(\S_\bullet)$.

Partition~$\gamma$ in $\gamma = \alpha + \beta$ where $\alpha \in \Min$,
 and let $\beta' \in \N^{\Gamma'}$ be a clone of~$\beta$.
We have:
\begin{align*}
 \Pp(\gamma \models \Diamond F)
 &  = \P_\bullet(\gamma \models \Diamond F)              && \text{definition of~$\P_\bullet$}\\
 & \ge \P_\bullet(\alpha + \beta' \models \Diamond F)    && \text{as $F \subseteq \N^\Gamma$}\\
 &  = \P_\bullet(\alpha \models \Diamond F)              && \text{see below}\\
 &  =  \Pp(\alpha \models \Diamond F)                    && \text{definition of~$\P_\bullet$}\\
 & \ge c                                                 && \text{by definition of~$c$}
\end{align*}
To show the equality $\P_\bullet(\alpha + \beta' \models \Diamond F) = \P_\bullet(\alpha \models \Diamond F)$
 we show that as long as there are $\Gamma$-processes originating from $\alpha + \beta'$
  (i.e., processes originating from~$\alpha$), they are eventually scheduled with probability~$1$.
In fact, let $\varepsilon \ne \alpha \in \N^\Gamma$ and $\beta' \in \N^{\Gamma'}$ be arbitrary.
Let $z = \max_{X \btran{} \delta} \Wp{\delta}$ be a bound on the number of processes that can be created per step.
Let $a := \Wp{\alpha}$ and $b := \Wp{\beta'}$.
It suffices to show that the probability that only $\Gamma'$-processes are scheduled is~$0$.
This probability is at most
\[
 \frac{b}{a+b} \cdot \frac{b+z}{a+b+z} \cdot \frac{b+2z}{a+b+2z} \cdot \ldots
\]
Recall that for a sequence $(a_i)_{i \in \N}$ with $a_i \in (0,1)$ we have $\prod_{i \in \N} (1-a_i) = 0$ if and only if
 the series $\sum_{i \in \N} a_i$ diverges.
It follows that the infinite product above is~$0$.
\qed
\end{proof}


\begin{qlemma}{\ref{lem-qual-reach-decid}}
 \stmtlemqualreachdecid
\end{qlemma}
\begin{proof}
For all $\ell \in \{0, \ldots, k-1\}$ let $X_\ell \btran{} \delta_\ell$ be a rule with
 $\alpha_\ell(X_\ell) \ge 1$ and $\alpha_\ell - X_\ell + \delta_\ell = \alpha_{\ell+1}$.
Assume for a contradiction that $i < j$ with $\alpha_i \le \alpha_j$.
For all $\ell \in \{j, \ldots, k-1\}$ define $\alpha'_\ell \in \N^\Gamma$ and $\beta_\ell \in \N^\Gamma$ so that $\alpha_\ell = \alpha'_\ell + \beta_\ell$
 and $\alpha'_j = \alpha_i$ and
\begin{itemize}
 \item $\alpha'_\ell(X_\ell) \ge 1$ and $\alpha'_\ell - X_\ell + \delta_\ell = \alpha'_{\ell+1}$ \quad or
 \item $\beta_\ell(X_\ell) \ge 1$ and $\beta_\ell - X_\ell + \delta_\ell = \beta_{\ell+1}$.
\end{itemize}
As $\bar{F}$ 
 is downward-closed, we have $\alpha'_\ell \in \bar{F}$ for all $\ell \in \{j, \ldots, k\}$. 
It follows that
 \[
  \alpha_0 \to \alpha_1 \to \ldots \alpha_{i-1} \to \alpha'_j \to \alpha'_{j+1} \to \ldots \to \alpha'_k \le \gamma
 \]
is, after removing repetitions, a path via $\bar{F}$-states.
As $i < j$, the path is shorter than the path $\alpha_0 \to \alpha_1 \to \ldots \to \alpha_k$, so we have obtained the desired contradiction.
\qed
\end{proof}

    \newcommand{\MM}{\mathcal{M}}
\newcommand{\NN}{\mathcal{N}}
\newcommand{\ba}[1]{\begin{array}{#1}}
\newcommand{\ea}{\end{array}}
\newcommand{\caps}[1]{\text{\textsc{\scriptsize #1}}}

\section{Proof of the Lower Complexity Bound}
\label{sec-complexity}

\begin{qtheorem}{\ref{thm-nonelementary}}
 \stmtthmnonelementary
\end{qtheorem}

\bigskip
\noindent
We claim that there exists a nonelementary function $f$ such that given a
Turing machine $\mathcal{M}$ running in space $f(k)$, we can build a pBPP
$\S = (\Gamma, \mathord{\btran{}}, \Prob)$ of size $k$, and
an upward closed set $F \subseteq \N^\Gamma$ such that
$\P(s_0 \models \bar{F} ~\U~ \widetilde{F}) = 0$ if and only if $\mathcal{M}$
doesn't terminate.

\smallskip

First, let us mention that we can change $\widetilde{F}$ by any downward closed
subset of $\bar{F}$. Assume for example that we wish to reach $\bar{G}$ for
some upward closed set $G$. This can be done by adding new processes $T$
and $T_2$, replacing $F$ by $F_2 = T^+F \cup T_2^+G$ and adding the
following transitions:

$$
T \btran{} T_2 \qquad T_2 \btran{} \eps \qquad
\forall X \in \Gamma.~ T \btran{} TX
$$

Then, if there is a way to reach $s \in \bar{G}$ by staying into $\bar{F}$ in
the original net, this means you can reach $Ts$ by staying into $T\bar{F}$ in
the modified net. Then, you can go into $T_2s$, which is not in $T_2 G$, and
then to $s$, that doesn't contain either $T$ or $T_2$, which means that is
in $\widetilde{F_2}$ given that nothing can spawn these processes anymore.
Reciprocally, if you can find a way to reach $\widetilde{F_2}$ in the modified
net, this means you have been able to successfully consume $T$ (otherwise you
could spawn any process), which means that there was a path in $T\bar{F}$
reaching a configuration $T_2s$ with $s \in \bar{G}$. This path could include
spurious process spawns from the process $T$, but by monotony, we can remove
these spawns, and get a path of the original net in $\bar{F}$ reaching a
configuration $s \in \bar{G}$.

\smallskip

We ignore the probability part, as it doesn't matter for our reachability
question and we call a BPP given with a set $\bar{F}$ a \emph{constrained BPP},
in which we consider only paths that stay inside $\bar{F}$.

\smallskip

Now, we build a constrained BPP that can simulate a Minsky machine.
In order to do that, let us remark that, instead of defining explicitly the
set $F$, one can list contraints that will define the allowed set $F$ as the
union of the sets $F$ implied by each constraint. The two basic types
of constraints that we will use are:

\begin{itemize}
\item Processes $X$ and $Y$ are incompatible (i.e. if $X$ is present,
$Y$ cannot be, and vice-versa). This is associated to $XY \uc$.
\item Process $X$ is unique (you can't have two copies of it). This is
associated to $XX \uc$.
\end{itemize}

We will allow ourselves to use more complex restrictions, that can be encoded
in this system by adding extra processes. These are:

\begin{itemize}
\item Process $X$ prevents rule $Y \btran{} u$ to be fired. This is done by
adding a dummy process $T$ and the following rules:

$$
Y \btran{} Tu \qquad T \btran{} \eps \qquad F = TX \uc
$$

\item Given a subpart of a BPP $\NN$ (that is, a set of process types, and a 
set of rules refering only to these process types), with a downward closed set 
of initial configurations $I$ and a downward closed set of final configurations
$F$, the subpart $\NN$ has an atomic behaviour: rules that don't belong to
$\NN$ can't be used if $\NN$ is not in $I$ or $F$. This is done by replacing 
each rule $Y \btran{} u$ outside $\NN$ by $Y \btran{} uT$ and 
$T \btran{} \eps$ with the added
constraint $(\overline{I \cup F})T \uc$.
\end{itemize}

We will use Petri Net-style depictions of BPPs, where process types are called
"places" and represented as circles while processes are called tokens and are
represented as bullets in their associated circle. A transition turning a
process $X$ into processes $Y_1 ... Y_k$ is represented as an arrow linking
the place $X$ to the places $Y_1 ... Y_k$. Moreover, we represent unique
processes (places that can contain only one token) as squares instead of
circles.

\subsection{Consumers, Producers and Counters}

We look at three specific kind of constrained BPP:

\begin{itemize}
\item A $k$-producer is a constrained BPP $\NN$ with an initial configuration
$s_i$, one data place $X$ and a downward closet subset of final configuration
$F$ such that if $s_f \in F$ is reachable from $s_i$, then $s_f(X) = k$.
\item A $k$-consumer is a constrained BPP $\NN$ with an initial configuration
 $s_i$, one data place $X$ and a downward closet subset of final configuration
$F$ such that for every $p \in \N$, $\NN$ can reach $F$ from $s_i + X^p$ if
and only $p \leq k$.
\item A $k$-loop is a constrained BPP $\NN$ with three disjoint
upward-closed set of configurations $A$, $B$, $C$, an initial configuration
$s_i$ and a downward-closed set of final configurations $F$ such that for
every run that goes from $s_i$ to $F$, the net always stay in
$A \cup B \cup C$ and cycles through these three sets, in the
order $A, B, C$ exactly $k$ times.
\end{itemize}

Intuitively, a $k$-producer is a gadget that forces the appearance of at least
$k$ tokens. This allows to force an operation to run more than $k$ times (by 
running the producer, allowing the operation to make exactly one token 
disappear, then require that all tokens have disappeared). Symmetrically, a 
$k$-consumer is a gadget that is able to consume up to $k$ 
tokens. This allows to restrain an operation to run up to $k$ times (by making
it generate such a token, then requiring these tokens to have disappeared).
Finally, a $k$-loop is a gadget that has a controlled cyclic behavior which
occurs exactly $k$ times. By syncing it with another gadget, this will allow
to make an operation run exactly $k$ times (it is basically a combination of
a producer and a consumer).

\smallskip

In all the following lemmas, note that the number of constraints is polynomial
in the number of places.

\begin{lemma}
There exists a constant $\alpha$ such that given a $k$-loop with $n$
places, one can build a $2^k$-producer with $n + \alpha$ places.
\end{lemma}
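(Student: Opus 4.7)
The plan is to wire the doubler discussed informally in Section~\ref{sub-markov-lower-bound} to the given $k$-loop so that each iteration of the loop executes one doubling of a chosen data place. Let $\NN$ be the given $k$-loop, with $n$ places, distinguished upward-closed phases $A$, $B$, $C$, initial configuration $s_i$, and downward-closed final set $F_{\NN}$. I would introduce three fresh places $X$, $Y$, $Z$, add the rules
\[
X \btran{} Y \qquad Y \btran{} ZZ \qquad Z \btran{} X,
\]
and extend the forbidden upward-closed set by $XB \uc \cup YC \uc \cup ZA \uc$. The new initial configuration is $s_i + X$, the data place is $X$, and the final set is $F_{\NN}$ (unconstrained in the $X, Y, Z$ components, hence still downward-closed). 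Only three new places are introduced, so the total is $n + 3$ and the candidate constant is $\alpha = 3$.

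The correctness argument proceeds in two steps. First, a case analysis against each of the three loop phases shows that while the loop is in phase $A$ only the rule $X \btran{} Y$ can fire without leaving $\bar{F}$: firing $Y \btran{} ZZ$ would create a $Z$-token forbidden by $ZA \uc$, and firing $Z \btran{} X$ requires a $Z$-token that is already forbidden in $A$; symmetrically, only $Y \btran{} ZZ$ can fire in phase $B$ and only $Z \btran{} X$ in phase $C$. Second, the loop's internal transition from phase $A$ to $B$ can only happen through configurations in which no $X$-token is present (else $XB \uc$ is violated), and analogously for $B \to C$ and $C \to A$. Combining these two observations, if $p_i$ denotes the number of $X$-tokens at the start of the $i$-th iteration, then during phase $A$ all $p_i$ tokens must be converted one-for-one into $Y$-tokens, during phase $B$ all $p_i$ $Y$-tokens must be consumed with each producing two $Z$-tokens, and during phase $C$ all $2 p_i$ $Z$-tokens become $X$-tokens. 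Since no rule consumes or produces more tokens than this permits, $p_{i+1} = 2 p_i$ exactly. Starting from $p_1 = 1$ and iterating $k$ times yields exactly $2^k$ tokens in $X$ when the loop finally exits to $F_{\NN}$.

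The main subtlety I expect lies in the tightness of the count. Because the doubler rules interact monotonically, it is easy to argue a lower bound of $2^k$ but harder to rule out spurious extra $X$-tokens that would violate the $k$-producer specification, which demands equality. The key will be that during each phase only one doubler rule is enabled, and that rule consumes rather than creates a token of the type which is forbidden in the \emph{next} phase, so each rule can be fired at most as many times as there are input tokens of that type present at the start of the phase; combined with the phase-transition constraint that no token of the forbidden type may remain, this gives the exact count. A minor bookkeeping issue is that the three new constraints $XB \uc$, $YC \uc$, $ZA \uc$ are syntactic shorthands for a union of upward-closures over the finitely many minimal elements of $A$, $B$, $C$, but this only adds constraints and not places, so the total remains $n + 3$.
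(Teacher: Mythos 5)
You are essentially reproducing the paper's own construction: a three-place doubling cycle whose rules, because of the incompatibility constraints, can each fire in exactly one phase of the loop, so that every iteration of the $k$-loop forces every data token to make exactly one lap and hence doubles the population (the paper's version uses places $\overline{A},\overline{B},\overline{C}$ with the doubling on the step into the output place $\overline{C}$; yours is the same up to renaming and up to where in the $3$-cycle the doubling sits). Your per-cycle analysis --- only one of the three rules is firable in each phase, and each phase transition forces all tokens of the type forbidden in the next phase to advance --- is correct and is the same argument the paper uses.

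The genuine gap is in your choice of final set. You take the final configurations to be $F_{\mathcal{N}}$, explicitly unconstrained in the $X,Y,Z$ components, and this breaks the exactness requirement in the definition of a producer (every \emph{reachable} final configuration must have exactly $2^k$ tokens in the data place). Nothing in your constraints pins the tokens to the data place at the moment the run stops: if the loop's final configuration lies in phase $A$, a token may fire $X \btran{} Y$ during that last phase and the run still reaches $F_{\mathcal{N}}$ with only $2^k-1$ tokens in $X$ (that token can never be forced onward, since no further $B$- or $C$-phase will occur); and if the loop's final configuration lies in $C$, the last batch of $Z \btran{} X$ conversions is not forced at all, so a reachable final configuration can have all $2^k$ tokens still in $Z$ and none in $X$. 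Your own forcing step (``during phase $C$ all $2p_i$ $Z$-tokens become $X$-tokens'') is valid only because the \emph{next} $A$-phase forbids $Z$; in the last cycle there is no next $A$-phase to invoke. The paper closes exactly this hole by strengthening the producer's final-configuration requirement: the loop must be in its final configuration \emph{and} the two non-output places ($\overline{A}$, $\overline{B}$ there) must be empty. The analogous fix for you is to require $Y=Z=0$ in the final configurations (this set is still downward-closed and costs no extra places); with that addition your counting argument goes through and every reachable final configuration has exactly $2^k$ tokens in $X$.
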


\begin{proof}
This producer will be of the following form:

\begin{center}
\begin{tikzpicture}
\tikzstyle{place} = [circle,thick,draw,minimum size=5mm];
\tikzstyle{bplace} = [thick,draw,minimum size=5mm];
\tikzstyle{subnet} = [rectangle,thick,draw,minimum width=3cm,
              minimum height=1.2cm];

\node[subnet] (counter) at (0,0) {$k$-loop (A/B/C)};
\node[place] (B2) at (3.5,0.5) {};
\node[above of=B2,node distance=0.5cm] {$\overline{B}$};
\node[place] (C2) at (5,0.5) {$\bullet$};
\node[above of=C2,node distance=0.5cm] {$\overline{C}$};
\node[place] (A2) at (4.25,-0.5) {};
\node[below of=A2,node distance=0.5cm] {$\overline{A}$};
\draw[->,dashed] (C2) -- (6,0.5);
\draw[->] (B2) -- (C2) node[pos=0.5,above] {$*2$};
\draw[->] (C2) -- (A2);
\draw[->] (A2) -- (B2);
\end{tikzpicture}
\end{center}

\underline{\textit{Definition of the Producer and Constraints:}}

\begin{itemize}
\item The place labelled by $\overline{C}$ is the final place that will contain
the required number of tokens in the final configuration.
\item The net is in its final configuration if the loop is in its final
configuration and
there is no more tokens in places labelled by $\overline{A}$, $\overline{B}$.
\item Places labelled by $\overline{A}$ (resp. $\overline{B}$, $\overline{C}$)
are incompatible with the loop being in configurations inside $A$
(resp. $B$, $C$).
\end{itemize}

In order to put the net in its final configuration, the loop must perform
$k$ cycles on $A$-$B$-$C$. This means that tokens in the places $\overline{A}$,
$\overline{B}$ and $\overline{C}$ must simultaneously move. Moreover, every
time the loop performs a cycle, the number of tokens is doubled. This means
that at the end, $2^k$ tokens will be in the final place $\overline{C}$.

\qed

\end{proof}

\begin{lemma}
There exists a constant $\alpha$ such that given a $k$-loop with $n$
places, one can build a $2^k$-consumer with $n + \alpha$ places.
\end{lemma}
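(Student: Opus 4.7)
The plan is to dualize the construction just given for the $2^k$-producer, replacing its doubling step with a halving step. Given the $k$-loop on phases $A$, $B$, $C$ with initial configuration $s_i$ and final set $F$, I would introduce three fresh places $\overline{A}, \overline{B}, \overline{C}$ that cycle in lockstep with the loop through incompatibility constraints of the same form used in the producer (place $\overline{A}$ incompatible with loop-configurations in $A$, and so on). A distinguished place $X$ holds the $p$ input tokens. Where the producer had a rule that doubled the number of tokens in $\overline{C}$, the consumer will instead embed in one leg of the $\overline{A}$-$\overline{B}$-$\overline{C}$ cycle the halving gadget described earlier in the paper (with its auxiliary processes $T, P, \overline{P}, P_1, \overline{P_1}, P_2, \overline{P_2}$), applied to the data place $X$. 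This uses only a constant number of extra places beyond the $k$-loop's $n$, as required.

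Once this is set up, the key invariant to verify is that, after the $i$-th completed cycle of the $k$-loop, the minimum number of $X$-tokens that can remain is $\lceil p/2^i \rceil$, by the monotonicity / minimality property of the halving gadget recalled in the paper. The consumer's final set is defined to require both that the $k$-loop has reached its own final configuration and that no $X$-tokens remain. After $k$ iterations the second requirement is reachable exactly when the residual count has been driven to zero, which by the invariant corresponds to $p \leq 2^k$ up to an off-by-one that can be absorbed by a constant-size pre- or post-processing step---for instance, inserting one extra halving, or providing an explicit cascade that removes a single residual $X$-token at the end. These constant overheads fit within $\alpha$.

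The main obstacle I anticipate is certifying that the halving gadget can be safely reused $k$ times within the same run, as it was originally described only for a single application to a fixed input. In particular, its auxiliary cycle on $P_1, \overline{P_1}, P_2, \overline{P_2}$ must be re-initialized between loop iterations, and its side-tokens $T, P, \overline{P}$ must be fully cleared before the $\overline{A}$-$\overline{B}$-$\overline{C}$ cycle advances. This is enforced by declaring that advancing the outer cycle is incompatible with the presence of any leftover halving-auxiliary tokens, which is the exact analogue of how the producer's constraints prevent half-finished doublings; the rigorous verification then closely parallels the producer proof just given.
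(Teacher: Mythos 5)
Your construction is essentially the paper's own: the paper likewise couples the divide-by-two gadget (with its $P,\overline{P},P_1,\overline{P_1},P_2,\overline{P_2},T$ auxiliaries) to the $k$-loop through a companion $\overline{A}$-$\overline{B}$-$\overline{C}$ cycle gated by incompatibility constraints, so that each loop iteration halves (rounded up) the token count, and it resolves both of your flagged concerns exactly as you suggest --- re-initialization is forced by making $\overline{A}$ incompatible with $T$, $P_1$, $P_2$, $\overline{P_2}$, and the residual single token is disposed of by letting the last iteration move it into a unique final place instead of back into $\overline{B}$. So the proposal is correct at the paper's level of detail and follows the same route.
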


\begin{proof}
This consumer will be of the following form:

\begin{center}
\begin{tikzpicture}
\tikzstyle{place} = [circle,thick,draw,minimum size=5mm];
\tikzstyle{bplace} = [thick,draw,minimum size=5mm];
\tikzstyle{subnet} = [rectangle,thick,draw,minimum width=3cm,
              minimum height=1.2cm];

\node[subnet] (counter) at (0,0) {$k$-loop (A/B/C)};
\node[place] (B2I) at (4.5, 1) {};
\node[above of=B2I,node distance=0.5cm] {$\overline{B}$};
\node[bplace] (P) at (6.5, 1) {};
\node[above of=P,node distance=0.5cm] {$T$};
\node[place] (P2) at (8.5, 1) {};
\node[above of=P2,node distance=0.5cm] {$T2$};
\node[place] (A2) at (4.5, -1) {};
\node[below of=A2,node distance=0.5cm] {$\overline{A}$};
\node[place] (C2) at (6, -1) {};
\node[below of=C2,node distance=0.5cm] {$\overline{C}$};
\node[bplace] (F) at (3, -1) {};
\node[below of=F,node distance=0.5cm] {$F$};
\node[place] (S1) at (7.5, -0.25) {};
\node[above of=S1,node distance=0.5cm] {$P_1$};
\node[place] (S2) at (9, -0.25) {};
\node[above of=S2,node distance=0.5cm] {$\overline{P_2}$};
\node[place] (S3) at (9, -1.75) {};
\node[below of=S3,node distance=0.5cm] {$P_2$};
\node[place] (S4) at (7.5, -1.75) {$\bullet$};
\node[below of=S4,node distance=0.5cm] {$\overline{P_1}$};

\draw[->,dashed] (3.5, 1) -- (B2I);
\draw[->] (B2I) -- (P) node[pos=0.5,above] {$P$};
\draw[->] (P) -- (P2) node[pos=0.5,above] {$\overline{P}$};
\draw[->] (P2) -- (9.5, 1);
\draw[->] (S1) -- (S2);
\draw[->] (S2) -- (S3);
\draw[->] (S3) -- (S4);
\draw[->] (S4) -- (S1);
\draw[->] (7.5, -1) -- (C2);
\draw[->] (C2) -- (A2);
\draw[->] (A2) -- (F);
\draw[->] (A2) -- (B2I);
\end{tikzpicture}
\end{center}

\underline{\textit{Definition of the Consumer and Constraints:}}
\begin{itemize}
\item The place labelled by $\overline{B}$ is the initial place, that will
contain the initial number of tokens to consume.
\item The net is in its final configuration if the loop is in its final
configuration, and the only tokens in the remainder of the net are in the
places labelled by $\overline{P_1}$ and $F$.
\item Places labelled by $\overline{A}$ (resp. $\overline{B}$, $\overline{C}$)
are incompatible with the loop being in configurations inside $A$
(resp. $B$, $C$).
\item The transitions labelled by $P$ (resp. $\overline{P}$) are incompatible
with the presence of tokens in the places labelled by $\overline{P_1}$ and
$\overline{P_2}$ (resp. $P_1$ and $P_2$).
\item The place labelled $\overline{A}$ is incompatible with the presence of
a token in the places labelled by $P_1$, $P_2$, $\overline{P_2}$ and $T$.
\end{itemize}

Let us assume our initial configuration has $n$ tokens in the place
$\overline{B}$. We
are looking to a run that empties this place. In order to do that, let us look
at what happen in one step of the loop. When the loop in in configuration $A$,
the token in the places $P_1$, $P_2$, $\overline{P_1}$ and $\overline{P_2}$ can
cycle around these places. For each such cycle, two tokens can be removed from
the place $\overline{B}$, and one token is created in place $\overline{C}$.
Once the place $\overline{B}$ is empty, the loop can move into configuration
 $B$. There, in order to be able to move all tokens into place $\overline{A}$,
we must have the place $T$ empty, and the cycling token into place
$\overline{P_1}$. Now, we can move all tokens in place $\overline{A}$, the
loop in configuration $C$, then all tokens in place $\overline{B}$ and the
loop back in configuration $A$. This means our net is back in its original
configuration,
except the loop has performed one cycle, and the number of tokens in
$\overline{B}$ has been halved (rounded up). This can be done as many times
as the loop allows it, with the final iteration moving one token into $F$
instead of back into $\overline{B}$. This allows to consume up to $2^k$
tokens, where $k$ is the number of iterations of the loop.

\qed

\end{proof}

\begin{lemma}
There exists a constant $\alpha$ such that given a $k$-consumer with $n$ places
and a $k$-producer with $n$ places, one can build a $k$-loop with
$2n+\alpha$ places.
\end{lemma}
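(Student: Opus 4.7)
The plan is to sequentially chain the producer, a cycling wrapper, and the consumer, using phase markers to enforce the ordering. I would introduce three unique control places $c_A, c_B, c_C$ holding a single \emph{cycle token} (initially in $c_A$), and define the three regions as $A = c_A\uc$, $B = c_B\uc$, $C = c_C\uc$; they are pairwise disjoint because the cycle token is unique. I would also add three unique phase markers $\phi_{\mathrm{prod}}, \phi_{\mathrm{cyc}}, \phi_{\mathrm{cons}}$, with the initial token in $\phi_{\mathrm{prod}}$. The overall initial configuration $s_i$ takes the union of the producer's initial configuration, the consumer's initial configuration, and tokens in $c_A$ and $\phi_{\mathrm{prod}}$. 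The downward-closed final set $F$ requires both sub-gadgets to be in their own final configurations, the tokens $\phi_{\mathrm{cons}}$ and $c_A$ to be present, and both data places $X_p$ and $X_c$ to be empty.

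Using the ``$X$ prevents rule $Y \btran{} u$'' gadget from earlier in the appendix, I would restrict every producer rule to require $\phi_{\mathrm{prod}}$, every consumer rule to require $\phi_{\mathrm{cons}}$, and the cycling rules below to require $\phi_{\mathrm{cyc}}$. A phase switch replacing $\phi_{\mathrm{prod}}$ by $\phi_{\mathrm{cyc}}$ is gated to fire only when the producer has just reached a configuration in its own final set (so in particular $X_p = k$), and a switch $\phi_{\mathrm{cyc}} \to \phi_{\mathrm{cons}}$ is gated on $X_p = 0$ with the cycle token back in $c_A$. The cycling phase itself consists of three Petri-net-style transitions $c_A + X_p \to c_B$, $c_B \to c_C$, $c_C \to c_A + X_c$; the multi-input ones are encoded using auxiliary temporary tokens and incompatibility constraints exactly as in the halving construction earlier in the appendix.

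For correctness, any run from $s_i$ to $F$ must traverse all three phases in order. Phase~1 keeps the cycle token in $c_A$, so no cycle through $A \to B \to C \to A$ occurs; it ends with $X_p = k$ by the producer's guarantee. Phase~2 performs a sequence of cycles, each transferring one token from $X_p$ to $X_c$, so the phase terminates after exactly $k$ cycles and with $X_c = k$. Phase~3 then launches the consumer on its initial configuration together with $X_c^k$, which is feasible because $k \le k$; during phase~3 the cycle token again sits in $c_A$. Hence every successful run cycles through $A, B, C$ exactly $k$ times, and the total place count is $n + n + O(1) = 2n + \alpha$. The hard part will be implementing the composite ``cycle-and-transfer-one-token'' transitions atomically in BPP syntax (where each rule rewrites a single symbol) without opening an escape path that bypasses the token transfer; I would handle this using the temporary-token-plus-incompatibility pattern already deployed for the halving gadget above.
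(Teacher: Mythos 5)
Your high-level architecture is in fact the same as the paper's: the producer pins the number of cycles from below, the consumer pins it from above, and one token is emitted per revolution of the cycling token and routed into the consumer (the paper spawns it on the $\overline{B}\to\overline{C}$ move of its cycling token). The phase markers and the explicit $X_p\to X_c$ transfer loop are only cosmetic differences. The genuine gap is that your exactness argument is carried by the two-input transitions $c_A + X_p \to c_B$ and $c_C \to c_A + X_c$, i.e.\ by Petri-net-style steps that BPP cannot express: a BPP rule rewrites a single process, and the constraint set can only \emph{forbid} configurations, so there is no direct way to make the move of the cycle token atomically consume an $X_p$, nor to make that move \emph{require} the presence of an $X_p$ (presence of a blocker can disable a rule, but requiring presence of a resource would need an exactly-maintained complement place, which is precisely what BPP lacks and what the whole counter construction is working around). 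You acknowledge this and defer it to the ``temporary-token-plus-incompatibility pattern'' of the halving gadget, but that pattern only yields one side of the accounting: with $P/\overline{P}$-style temporaries and a $TT\up$-type constraint you can force each removal of an $X_p$ to drive the cycle token out of $c_A$ and forbid two removals within one revolution, giving \emph{at least} $k$ cycles before $X_p$ is empty. Nothing in it stops the cycle token from revolving without consuming anything, so your claim ``the phase terminates after exactly $k$ cycles'' does not follow; as written, runs with more than $k$ cycles are not excluded.

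The missing upper bound has to come from the other half of your own design, assembled the way the paper does it: make the only rule for $c_C$ be one that spawns a single $X_c$ together with $c_A$, let the $k$-consumer be the only way $X_c$ tokens can disappear, and require $X_c$ empty and the consumer in its final set in $F$; then the number of revolutions is at most $k$ (using monotonicity to reduce tokens injected over time to the consumer's initial-configuration guarantee). Your sketch contains these ingredients but never combines the two one-sided bounds, and instead rests the count on an atomic synchronization that does not exist in the model; supplying that combination, plus the concrete temporary-token encoding, is essentially the content of the paper's proof. Two smaller, fixable points: the sets $c_A\up$, $c_B\up$, $c_C\up$ are not literally disjoint subsets of $\N^\Gamma$, so you need mutual-exclusion constraints on $c_A,c_B,c_C$ (the paper's construction has the same wrinkle); and gating the phase switches on ``the producer/consumer is in its final configuration'' requires instantiating the rule-prevention gadget for every minimal element of the complement of that downward-closed set --- harmless for the $2n+\alpha$ place count if dummy types are shared, but it should be stated.
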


\begin{proof}

This loop will be of the following form:

\begin{center}
\begin{tikzpicture}
\tikzstyle{place} = [circle,thick,draw,minimum size=5mm];
\tikzstyle{bplace} = [thick,draw,minimum size=5mm];
\tikzstyle{subnet} = [rectangle,thick,draw,minimum width=3cm,
              minimum height=1.2cm];

\node[subnet] (prod) at (0,0) {$k$-producer};
\node[place] (X) at (2.5,0) {};
\node[below of=X,node distance=0.5cm] {$I$};
\node[bplace] (A1) at (4,0) {};
\node[below of=A1,node distance=0.5cm] {$A$};
\node[bplace] (B1) at (5.5,0) {};
\node[below of=B1,node distance=0.5cm] {$B$};
\node[bplace] (C1) at (7,0) {};
\node[below of=C1,node distance=0.5cm] {$C$};
\draw[->] (prod) -- (X);
\draw[->] (X) -- (A1);
\draw[->] (A1) -- (B1);
\draw[->] (B1) -- (C1);
\draw[->] (C1) -- (8,0);

\node[bplace] (A2) at (1.25,2.75) {$\bullet$};
\node[left of=A2,node distance=0.5cm] {$\overline{A}$};
\node[bplace] (B2) at (3,2) {};
\node[right of=B2,node distance=0.5cm] {$\overline{B}$};
\node[bplace] (C2) at (3,3.5) {};
\node[right of=C2,node distance=0.5cm] {$\overline{C}$};
\node[place] (F) at (4.5,2.75) {};
\node[below of=F,node distance=0.5cm] {$F$};
\node[subnet] (consumer) at (7.5,2.75) {$k$-consumer};
\draw[->] (A2) -- (B2);
\draw[->] (B2) -- (C2);
\draw[->] (C2) -- (A2);
\draw[->] (3,2.75) -- (F);
\draw[->] (F) -- (consumer);
\end{tikzpicture}
\end{center}

\underline{\textit{Definition of the loop and Constraints:}}

\begin{itemize}
\item The cycles of the loop are associated to the token cycling on
the places $\overline{A}$, $\overline{B}$ and $\overline{C}$.
\item The loop is in its final configuration if the producer and the consumer
are in their final configuration, and the places $I$, $A$, $B$ and $C$ are
empty.
\item The places $A$, $B$ and $C$ are mutually exclusive.
\item The place $A$ (resp. $B$, $C$) are incompatible with the presence of
a token in the place $\overline{A}$ (resp. $\overline{B}$, $\overline{C}$).
\end{itemize}

Let us look at a run going to the final configuration. In order to do that, the
producer has created $k$ tokens into place $I$. This means that a token has
been through $A$, $B$ and $C$ at least $k$ times, which means that the token
has cycled through $\overline{A}$, $\overline{B}$ and $\overline{C}$ at least
$k$ times. Moreover, whenever such a cycle has been performed, one token has
been created into $F$. As the consumer can consume only up to $k$ tokens, it
means there was also at most $k$ cycles.
\qed
\end{proof}

\subsection{Simulating a bounded counter machine}

In this section, we simulate a counter machine whose counters are bounded
by a constrained BPP. Our construction is made of one scheduler (see figure
\ref{fig:scheduler}) and as many counters as the machine we want to simulate
(see figure \ref{fig:counter}).

\bigskip

During the execution, when the scheduler is entering \caps{STEP1}, the place
$C$ of each counter will contain its value, and $\overline{C}$ its complement.
During
steps 2 to 5, most of the tokens from $C$ and $\overline{C}$ will be transfered
respectively to $B$ and $\overline{B}$. The places \caps{INCREMENT},
\caps{DECREMENT} and
\caps{ZERO-TEST} of each counter are called the operationnal places, and
contain a
token when the counter is currently performing an operation.
Finally, for each transition of the machine we are simulating, we have a
transition between \caps{STEP1} and \caps{STEP2} that fills for each counter
the correct operationnal place.

\bigskip

In order to ensure our counters perform the operations requested, we have the
following constraints (encoded in our upward closed set, as before):
\begin{itemize}
\item \caps{INIT}: The producer can only run during init. He must have finished
running before entering \caps{STEP1}.
\item \caps{STEP1}: Tokens may be moved from $C$ and $\overline{C}$ to $B$ and
$\overline{B}$. The consumer can be reset.
\item \caps{STEP2}: In order to enter this step, token
repartition must match the token that is simultaneously appearing
in the operationnal place:
\begin{itemize}
\item \caps{INCREMENT}: At most one token in $\overline{C}$, and none in $C$.
\item \caps{DECREMENT}: At most one token in $C$, and none in $\overline{C}$.
\item \caps{ZERO-TEST}: No tokens in $C$ or $B$.
\end{itemize}
Tokens in $T$ can be consumed according to the capacity of the
consumer.
\item \caps{STEP3}: In order to enter this step, $T$ must be empty.
Tokens may be moved freely between $C$ and $\overline{C}$.
\item \caps{STEP4}: In
order to enter this step, token repartition must match the token that is in
the operationnal place:
\begin{itemize}
\item \caps{INCREMENT}: At most one token in $C$, and none in $\overline{C}$.
\item \caps{DECREMENT}: At most one token in $\overline{C}$, and none in $C$.
\item \caps{ZERO-TEST}: No tokens in $C$ or $B$.
\end{itemize}
\item \caps{STEP5}: Tokens in the operational places can be deleted.
Tokens in $B$
and $\overline{B}$ can go back to $C$ and $\overline{C}$.
\item \caps{STEP6}: In order to enter this step, there must be no tokens in $B$,
$\overline{B}$ or in operational places.
\end{itemize}

\begin{figure}
\begin{center}
\begin{tikzpicture}
\tikzstyle{bplace} = [thick,draw,minimum size=5mm];

\node[bplace] (INIT) at (0, 0) {$\bullet$};
\node (lINIT) at (0, 0.5) {\tiny init};

\node[bplace] (STEP1) at (1.5, 0) {};
\node (l1) at (1.5, 0.5) {\tiny step1};
\draw[->] (INIT) -- (STEP1);

\node[bplace] (STEP2) at (3, 0) {};
\node (l2) at (3, 0.5) {\tiny step2};
\draw[->] (STEP1) -- (STEP2);

\node (label) at (2.25, 1.5)
{\tiny $\ba{c} \text{towards the operational} \\
\text{places of the counters} \ea$};
\draw[->] (2.25, 0) -- (label);

\node[bplace] (STEP3) at (4.5, 0) {};
\node (l3) at (4.5, 0.5) {\tiny step3};
\draw[->] (STEP2) -- (STEP3);

\node[bplace] (STEP4) at (6, 0) {};
\node (l1) at (6, 0.5) {\tiny step4};
\draw[->] (STEP3) -- (STEP4);

\node[bplace] (STEP5) at (7.5, 0) {};
\node (l1) at (7.5, 0.5) {\tiny step5};
\draw[->] (STEP4) -- (STEP5);

\node[bplace] (STEP6) at (9, 0) {};
\node (l1) at (9, 0.5) {\tiny step6};
\draw[->] (STEP5) -- (STEP6);

\draw (STEP6) -- (9, -1.5);
\draw (9, -1.5) -- (1.5, -1.5);
\draw[->] (1.5, -1.5) -- (STEP1);

\end{tikzpicture}
\end{center}
\caption{The scheduler of our simulated machine}
\label{fig:scheduler}
\end{figure}
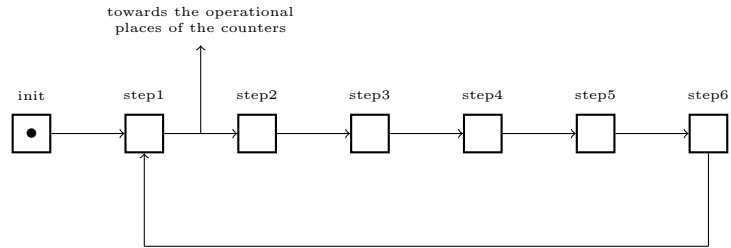

\begin{figure}
\begin{center}
\begin{tikzpicture}
\tikzstyle{place} = [circle,thick,draw,minimum size=5mm];
\tikzstyle{bplace} = [thick,draw,minimum size=5mm];
\tikzstyle{subnet} = [rectangle,thick,draw,minimum width=3cm,
              minimum height=1.2cm];

\node[place] (C1) at (0,0) {};
\node[above of=C1,node distance=0.5cm] {$C$};
\node[place] (C2) at (0, -3) {};
\node[below of=C2,node distance=0.5cm] {$\overline{C}$};
\node[bplace] (OPPLUS) at (2, -4.5) {};
\node[bplace] (OPMOINS) at (2, -6) {};
\node[bplace] (OPTEST) at (2, -7.5) {};
\node[place] (B1) at (4, 0) {};
\node[above of=B1,node distance=0.5cm] {$B$};
\node[place] (B2) at (4, -3) {};
\node[below of=B2,node distance=0.5cm] {$\overline{B}$};
\node[place] (CHECK) at (2, -1.5) {};
\node[left of=CHECK,node distance=0.5cm] {$T$};

\node[subnet] (CONSUMER) at (5.5, -1.5) {$k-1$-consumer};
\node[subnet] (PRODUCER) at (-3, -3) {$k$-producer};

\draw[->] (C1) edge[bend left=20] (C2);
\node (edgel1) at (-0.7,-1.5) {\tiny step3};
\draw[->] (C2) edge[bend left=20] (C1);
\node (edgel2) at (0.7, -1.5) {\tiny step3};
\draw[->] (C1) edge (B1);
\draw[->] (B1) edge[bend right=25] (C1);
\node (edgel3) at (2, 0.7) {\tiny step5};
\draw[->] (2, 0) -- (CHECK) node[left,pos=0.2] {\tiny step1};
\draw[->] (C2) edge (B2);
\draw[->] (B2) edge[bend left=25] (C2);
\node (edgel4) at (2, -3.7) {\tiny step5};
\draw[->] (2, -3) -- (CHECK) node[left,pos=0.2] {\tiny step1};
\draw[->] (PRODUCER) -- (C2) node[above,pos=0.5] {\tiny init};
\draw[->] (CHECK) -- (CONSUMER) node[above,pos=0.5] {\tiny step2};

\node (nodel1) at (2, -5) {\tiny INCREMENT};
\node (nodel2) at (2, -6.5) {\tiny DECREMENT};
\node (nodel3) at (2, -8) {\tiny ZERO-TEST};

\node (schedl1) at (-1, -4.5)
{\tiny $\ba{c} \text{from scheduler} \\ \text{during step1 to step2} \ea$};
\draw[->] (schedl1) -- (OPPLUS);
\node (schedl1b) at (3.8, -4.5) {};
\draw[->] (OPPLUS) -- (schedl1b) node[above,pos=0.5] {\tiny step5};

\node (schedl2) at (-1, -6)
{\tiny $\ba{c} \text{from scheduler} \\ \text{during step1 to step2} \ea$};
\draw[->] (schedl2) -- (OPMOINS);
\node (schedl2b) at (3.8, -6) {};
\draw[->] (OPMOINS) -- (schedl2b) node[above,pos=0.5] {\tiny step5};

\node (schedl3) at (-1, -7.5)
{\tiny $\ba{c} \text{from scheduler} \\ \text{during step1 to step2} \ea$};
\draw[->] (schedl3) -- (OPTEST);
\node (schedl3b) at (3.8, -7.5) {};
\draw[->] (OPTEST) -- (schedl3b) node[above,pos=0.5] {\tiny step5};
\end{tikzpicture}
\end{center}
\caption{A counter of our simulated machine}
\label{fig:counter}
\end{figure}
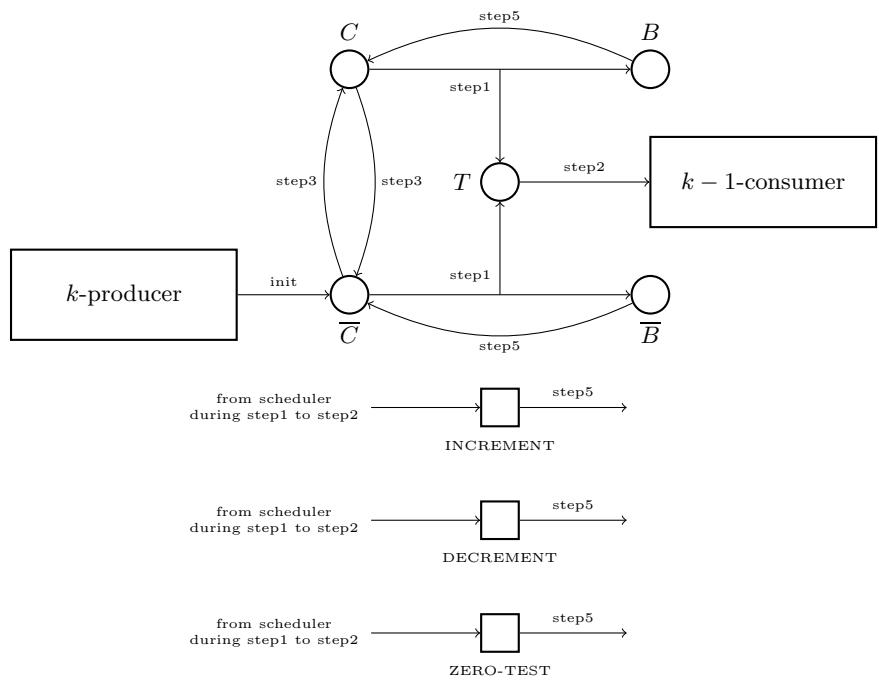

We claim that performing steps 2 to 5 moves the counter according to the
instruction given by the token in \caps{INCREMENT}, \caps{DECREMENT} or
\caps{ZERO-TEST}.

\smallskip

The case of \caps{ZERO-TEST} is simple: firing no transitions in the counter
allows to move from step 2 to step 5 unimpended, and the restrictions on
step 2 and 4 make sure that there must be no tokens in $C$ before or after
these operations. This means that the counter value must be zero and stays as
such.

\smallskip

For the case of \caps{INCREMENT}, the restrictions on entering step 2 means
that all tokens must be moved from $C$ to $B$, and all (except possibly one)
must be moved from $\overline{C}$ to $\overline{B}$. However, moving a token
creates a token in $T$, which means that in order to fulfill the restrictions
of step 3, at most $k-1$ tokens can be moved. Because the total number of
tokens in $C$, $B$, $\overline{C}$ and $\overline{B}$ is always $k$, and there
was no tokens in $B$ or $\overline{B}$ before this transfer, this means that
exactly $k-1$ tokens have been moved and that exactly one token remains in
$\overline{C}$. Requirements of step 4 means that this token must be moved
from $\overline{C}$ to $C$. Finally, the requirements of step 6 means that
after step 5, all tokens in $B$ and $\overline{B}$ have moved back to their
original places, which means that the number of tokens in $C$ has increased
by 1.

\smallskip

The case of \caps{DECREMENT} is symmetric.

    \section{Proofs of Section~\ref{sec-mdp}} \label{app-mdp}

\begin{qtheorem}{\ref{thm-mdp-existential}}
 \stmtthmmdpexistential
\end{qtheorem}

\begin{proof}
\newcommand{\suc}{\mathit{succ}}%
\newcommand{\Sat}{\mathit{Sat}}%
\newcommand{\Stable}{\mathit{Stable}}%
\newcommand{\Unstable}{\mathit{Unstable}}%
\newcommand{\down}[1]{\lfloor #1 \rfloor}%
 In this proof we say the scheduler \emph{wins} if the system reaches the upward-closed set~$F$.
 We also say that a configuration~$\alpha$ is \emph{winning} with probability~$1$
  if there exists a scheduler~$\sigma$ such that $\P_\sigma(\alpha \models \Diamond F) = 1$.
 The question is whether the initial configuration~$\alpha_0$ is winning with probability~$1$.

 The proof idea is to construct a finite-state MDP whose state space~$Q$ is a finite subset of $\N^\Gamma$.
 In the finite-state MDP the scheduler wins if a state from $Q \cap F$ is reached.
 The question whether the scheduler can win with probability~$1$ is decidable in polynomial time for finite-state MDPs.
 Moreover, for reachability with probability~$1$ deterministic and memoryless schedulers suffice and can be computed efficiently,
  see e.g.~\cite{ChatterjeeH11} or the references therein.
 The actions in the finite-state MDP are as in the infinite-state MDP (each type corresponds to an action),
  but we need to redirect transitions that would leave the finite state space.

 Define the directed graph~$G$ with vertex set~$\Gamma$ and edges $(X,Y)$ whenever
  there is a rule $X \btran{} \beta$ with $\beta(Y) \ge 1$.
 We say that $Y$ is a \emph{successor} of~$X$ if $(X,Y)$ is in the reflexive and transitive closure of the edge relation.
 For $X \in \Gamma$, we write $\suc(X)$ for the set of successors of~$X$.
 Note that $X \in \suc(X)$ for all $X \in \Gamma$.
 Let $\phi_1, \ldots, \phi_m$ denote the minimal elements of~$F$.
 Define $K := \max \{ \phi_i(X) \mid 1 \le i \le m, \ X \in \Gamma \}$.
 Let $\alpha \in \N^\Gamma$.
 Define the set of \emph{saturated} types by $\Sat(\alpha) := \{X \in \Gamma \mid \forall Y \in \suc(X): \alpha(Y) \ge K\}$.
 Define $\down{\alpha} \in \N^\Gamma$ by $\down{\alpha}(X) = K$ for $X \in \Sat(\alpha)$ and $\down{\alpha}(X) = \alpha(X)$ for $X \not\in \Sat(\alpha)$.
 Note that $\Sat(\alpha) = \Sat(\down{\alpha})$.
 We make the following observation.
 \begin{itemize}
  \item[(1)]
   Let $\alpha \in \N^\Gamma$ be winning with probability~$1$.
   Then the scheduler can win with probability~$1$ by \emph{never} scheduling a type $X \in \Sat(\alpha)$.
   Moreover, $\down{\alpha}$ is winning with probability~$1$.
 \end{itemize}
 For a configuration $\alpha \in \N^\Gamma$, define $\Stable(\alpha) := \{X \in \Gamma \mid \alpha(X) \le K\} \cup \Sat(\alpha)$
  and $\Unstable(\alpha) := \N^\Gamma \setminus \Stable(\alpha) = \{X \in \Gamma \mid \alpha(X) > K \text{ and } \exists Y \in \suc(X): \alpha(Y) < K\}$.
 We call $\alpha$ \emph{stable} resp.\ \emph{unstable} if $\Unstable(\alpha) = \emptyset$ resp.\ $\Unstable(\alpha) \ne \emptyset$.

 We define a finite-state MDP so that the scheduler can win with probability~$1$ if and only if
  it can win in the original MDP with probability~$1$.
 (In fact, we even show that the optimal winning probability stays the same.)
 The set of states of the finite-state MDP is
  \[
   Q := \{\down{\alpha} \mid \alpha \in \N^\Gamma \text{ is stable}\} \subseteq \{0, \ldots, K\}^\Gamma\,.
  \]
 Note that $|Q| \le (K+1)^{|\Gamma|}$.
 The target states are those in~$F$.
 The actions are as in the original infinite-state MDP, i.e., if $\alpha(X) \ge 1$, then scheduling~$X$ is a possible action in~$\alpha$.
 As in the infinite-state MDP, there is a special action~$\bot$ enabled only in the empty configuration~$\varepsilon$
  (which is losing for the scheduler except in trivial instances).
 If an action can lead to a state not in~$Q$, we need to redirect those transitions to states in~$Q$ as we describe in the following.
 If a transition leads to a stable configuration~$\alpha$ outside of~$Q$, then the transition is redirected to $\down{\alpha} \in Q$, following Observation~(1).
 (Also by Observation~(1), the actions corresponding to types $X \in \Sat(\alpha) = \Sat(\down{\alpha})$
   could be disabled without disadvantaging the scheduler.)
 If a transition leads to an unstable configuration~$\alpha$, then this transition is redirected to a probability distribution~$p_\alpha$ on~$Q$
  so that for each $q \in Q$ we have that $p_\alpha(q)$ is the probability that in the original infinite-state MDP a configuration~$\beta \in \N^\Gamma$
   with $\down{\beta} = q$ is the first stable configuration reached when following a particular class of optimal strategies which we describe in the following.

 The strategy class relies on the fact that in a configuration~$\alpha \in \N^\Gamma$ with $\alpha(X) > K$ for some $X \in \Gamma$,
  the scheduler does not suffer a disadvantage by scheduling~$X$:
  Indeed, by scheduling $X$, only the $X$-component of the configuration can decrease, and if it decreases, it decreases by at most~$1$;
   so we have $\alpha'(X) \ge K$ also for the successor configuration~$\alpha'$.
 As those $X$-processes in excess of~$K$ can only be useful for producing types that are successors of~$X$,
  one can schedule them freely and at any time.
 We call a strategy \emph{cautious} if it behaves in the following way while the current configuration~$\alpha \in \N^\Gamma$ is unstable:
 \begin{quote}
  Let $X_1, \ldots, X_k$ with $k \le |\Gamma|$ be the shortest path (where ties are resolved in an arbitrary but deterministic way)
   in the graph~$G$ from the beginning of this proof
   such that $\alpha(X_1) > K$ and $\alpha(X_k) < K$ and $\alpha(X_i) = K$ for $2 \le i \le k-1$.
  Schedule~$X_1$.
 \end{quote}
 We claim that, with probability~$1$, a stable configuration will eventually be reached if a cautious strategy is followed.
 To see that, consider an unstable configuration~$\alpha$ and let $X_1 \in \Gamma$ be scheduled, i.e.,
  $X_1, \ldots, X_k$ with $k \le |\Gamma|$ is the shortest path in the graph~$G$
  such that $\alpha(X_1) > K$, and $\alpha(X_k) < K$ and $\alpha(X_i) = K$ for $2 \le i \le k-1$.
 By the definition of~$G$ there is a rule $X_1 \btran{p} \beta$ with $p > 0$ and $\beta(X_2) \ge 1$.
 \begin{itemize}
  \item If $k > 2$, the successor configuration is still unstable, but with probability at least~$p$ its corresponding path in~$G$ has length at most $k-1$.
  \item If $k = 2$, we have with probability at least~$p$ that the successor configuration~$\alpha'$ satisfies $\alpha'(X_2) > \alpha(X_2) < K$.
 \end{itemize}
 Observe that if an increase $\alpha'(X_2) > \alpha(X_2) < K$ as described in the case $k=2$ happens,
  then the $X_2$-component will remain above $\alpha(X_2)$ as long as the cautious strategy is followed,
  because the cautious strategy will not schedule~$X_2$ as long as the $X_2$-component is at most~$K$.
 Moreover, such increases can happen only finitely often before all types are saturated.
 It follows that with probability~$1$ a stable configuration will be reached eventually.

 It is important to note that for cautious strategies the way how ties are resolved does not matter.
 Furthermore, for unstable $\alpha \in \N^\Gamma$ it does not matter if arbitrary types $X \in \Gamma$ with $\alpha(X) > K$ are scheduled in between.
 More precisely, for unstable~$\alpha$, consider two schedulers, say $\sigma_1$ and~$\sigma_2$,
  that both follow a cautious strategy but may schedule other types~$X$ with $\alpha(X) > K$ in between.
 If for each type $X \in \Gamma$ the same probabilistic outcomes occur in the same order when following $\sigma_1$ and~$\sigma_2$, respectively,
  then the resulting stable configurations $\beta_1$ and~$\beta_2$ satisfy $\down{\beta_1} = \down{\beta_2}$.
 In other words, differences can only occur in saturated types.

 Recall that in the finite-state MDP we need to redirect those transitions that lead to unstable configurations.
 We do that in the following way.
 For unstable~$\alpha$, let
  \[
   T_\alpha := \{\down{\beta} \mid \text{ $\beta$ is stable and reachable from $\alpha$ using a cautious strategy}\} \subseteq Q
  \]
 and let $p_\alpha : T_\alpha \to (0,1]$ be the corresponding probability distribution.
 As argued above, $p_\alpha$ does not depend on the particular choice of the cautious strategy.
 However, for the construction of the finite-state MDP one does not need to compute~$p_\alpha$,
  because for reachability with probability~$1$ in a finite-state MDP the exact values of nonzero probabilities do not matter.
 Note that we have $p_\alpha(q) > 0$ for all $q \in T_\alpha$.
 So if a scheduling action in the finite-state MDP would lead, in the infinite-state MDP, to an unstable configuration~$\alpha$ with probability~$p_0$,
  then in the finite-state MDP we replace this transition by transitions to~$T_\alpha$, each with probability $p_0 / |T_\alpha|$.
 As argued above, this reflects a cautious strategy (which is optimal) of the scheduler in the original infinite-state MDP for states outside of~$Q$.

 This redirecting needs to be done for all unstable~$\alpha$ that are reachable from~$Q$ within one step.
 There are only finitely many such~$\alpha$.

 The overall decision procedure is thus as follows:
 \begin{enumerate}
  \item Construct the finite-state MDP with $Q$ as set of states as described.
  \item Check whether $\down{\alpha}$ is winning with probability~$1$, where the target set is $Q \cap F$.
 \end{enumerate}
 If $\down{\alpha}$ is winning with probability~$1$, then there is a deterministic and memoryless scheduler.
 This scheduler can then be extended for the infinite-state MDP by a cautious strategy,
  resulting in a deterministic and memoryless scheduler.

 It remains to show how $T_\alpha$ can be computed.
 We compute~$T_\alpha$ using the decidability of the reachability problem for Petri nets.
 We construct a Petri net from~$\S$ that simulates cautious behaviour of the scheduler in unstable configurations.
 The set of places of the Petri net is $P := \Gamma \cup \{S_X \mid X \in \Gamma\}$, where the $S_X$ are fresh symbols.
 The intention is that a configuration $\alpha \in \N^P$ with $\alpha(S_X) = 1$ indicates that $X$ is saturated.

 For the transitions of the Petri net we need some notation.
 For $\alpha, \beta \in \N^P$ we write $\alpha \btran{}_\bullet \beta$ to denote a transition
   whose input multiset is~$\alpha$ and whose output multiset is~$\beta$.
 For $X \in P$ and $i \in \N$ we write $X^i$ to denote $\alpha \in \N^P$ with $\alpha(X) = i$ and $\alpha(Y) = 0$ for $Y \ne X$.
 For $\alpha \in \N^\Gamma$ we also write $\alpha$ to denote
  $\alpha' \in \N^P$ with $\alpha'(X) = \alpha(X)$ for $X \in \Gamma$ and $\alpha'(X) = 0$ for $X \not\in \Gamma$.

 We include transitions as follows.
 For each $X \btran{} \beta$ we include $X^{K+1} \btran{}_\bullet X^K + \beta$.
 This makes sure that the transition $X \btran{} \beta$ ``inherited'' from~$\S$ is only used ``cautiously'',
  i.e., in the presence of more than~$K$ processes of type~$X$.
 For each strongly connected component $\{X_1, \ldots, X_k\} \subseteq \Gamma$ of the graph~$G$ from the beginning of the proof,
  we include a transition $X_1^K + \cdots + X_k^K + \gamma \btran{}_\bullet \{S_{X_1}, \ldots, S_{X_k}\} + \gamma$
  with $\gamma = S_{Y_1} \ldots S_{Y_\ell}$ for $\{Y_1, \ldots, Y_\ell\} = \left( \bigcup_{i=1}^k \suc(X_i) \right) \setminus \{X_1, \ldots, X_k\}$.
 This reflects the definition of ``saturated'':
  the types of a strongly connected component $\{X_1, \ldots, X_k\}$ are saturated in a configuration~$\alpha \in \N^\Gamma$
  if and only if $\alpha(X_i) \ge K$ holds for all $1 \le i \le k$ and all successors are saturated.
 We also include transitions that ``suck out'' superfluous processes from saturated types: $\{S_X, X\} \btran{}_\bullet \{S_X\}$ for all $X \in \Gamma$.

 \newcommand{\petri}[1]{\langle #1 \rangle}%
 For a configuration $q \in Q$ we define $\petri{q} \in \N^P$ as the multiset with
  $\petri{q}(X) = q(X)$ and $\petri{q}(S_X) = 0$ for $X \not\in \Sat(q)$, and
  $\petri{q}(X) = 0$ and $\petri{q}(S_X) = 1$ for $X \in \Sat(q)$ (and hence $q(X) = K$).
 By the construction of the Petri net we have for all unstable $\alpha \in \N^\Gamma$ and all $q \in Q$
  that a stable configuration $\beta \in \N^\Gamma$ with $\down{\beta} = q$ is reachable from~$\alpha$ in~$\S$ using a cautious strategy
   if and only if
  $\petri{q}$ is reachable from~$\alpha$ in the Petri net.
 It follows that $T_\alpha$ can be computed for all unstable~$\alpha$.
\qed
\end{proof}

\begin{qtheorem}{\ref{thm-mdp-universal}}
 \stmtthmmdpuniversal
\end{qtheorem}
\begin{proof}
We extend the state space from $\N^\Gamma$ to $\N^\Gamma \times \N^\Gamma$.
A configuration $(c,a) \in \N^\Gamma \times \N^\Gamma$ contains the multiset~$c \in \N^\Gamma$ of current processes (as before),
 and an \emph{age} vector $a \in \N^\Gamma$ indicating for each $X \in \Gamma$ how many steps ago $X$ was last scheduled.
We take $a(X) = 0$ for those $X$ with $c(X) = 0$.
An MDP $\D'(\S)$ can be defined on this extended state space in the straightforward way:
 in particular, in each step in which an $X \in \Gamma$ with $c(X) > 0$ is not scheduled,
  the age $a(X)$ is increased by~$1$.
We emphasize this extension of the state space does \emph{not} result from changing the pBPP~$\S$,
 but from changing the MDP induced by~$\S$.

An age $a(X) \ge k$ indicates that $X$ was not scheduled in the last $k$ steps.
So a run $(c_0,a_0) (c_1,a_1) \ldots$ is fair if and only if $a_i(X) < k$ holds for all $i \in \N$ and all $X \in \Gamma$.
Define $G := (F \times \N^\Gamma) \cup (\N^\Gamma \times \{k, k+1, \ldots\}^\Gamma)$.
There is a natural bijection between the $k$-fair runs in~$\D(\S)$ avoiding~$F$ and all runs in~$\D'(\S)$ avoiding~$G$.
So it suffices to decide whether there exists a scheduler~$\sigma$ for~$\D'(\S)$
 with $\P_\sigma((\alpha_0, (0, \ldots, 0)) \models \Diamond G) < 1$.

To decide this we consider a turn-based game between two players, Scheduler (``he'') and Probability (``she'').
As expected, in configuration~$\alpha \in \N^\Gamma \times \N^\Gamma$ player Scheduler selects a type $X \in \Gamma$ with $\alpha(X) \ge 1$
 and player Probability picks~$\beta$ with $X \btran{} \beta$, leading to a new configuration $T(\alpha, X \btran{} \beta)$,
  where $T(\alpha, X \btran{} \beta) \in \N^\Gamma \times \N^\Gamma$ denotes the configuration obtained from~$\alpha$ by applying $X \btran{} \beta$
  according to the transitions of~$\D'(\S)$.
Despite her name, player Probability is not bound to obey the probabilities in~$\S$; rather she can pick~$\beta$ with $X \btran{} \beta$ as she wants.
The goal of Scheduler is to avoid~$G$; the goal of Probability is to hit~$G$.

We show that in this game one can compute the winning region for Probability.
We define sets $W_0 \subseteq  W_1 \subseteq \ldots$ with $W_i \subseteq \N^\Gamma \times \N^\Gamma$ for all $i \in \N$:
define $W_0 := G$ and for all $i \in \N$ define
 \[
  W_{i+1} := W_i \cup \{\alpha \in \N^\Gamma \times \N^\Gamma \mid
   \forall X \in \Gamma \ \exists \beta: X \btran{} \beta \text{ and } T(\alpha, X \btran{} \beta) \in W_i\}\,.
 \]
For all $i \in \N$ we have that $W_i$ is the set of configurations where Probability can force a win in at most $i$ steps.
As $W_0 = G$ is upward-closed with respect to the componentwise ordering $\mathord{\preceq}$ on $\N^\Gamma \times \N^\Gamma$,
 all $W_i$ are upward-closed with respect to~$\mathord{\preceq}$.
Considering the minimal elements of each~$W_i$, it follows by Dickson's lemma that for some $i$ we have $W_i = W_{i+1}$ and hence $W_j = W_i$ for all $j \ge i$.
Then $W := W_i$ is the winning region for Probability.
One can compute the minimal elements of~$W$ by computing the minimal elements for each~$W_1, \ldots, W_i = W$.
\newcommand{\tG}{\widetilde{G}}%
Define $\tG := (\N^\Gamma \times \N^\Gamma) \setminus W$, the winning region for Scheduler, a downward-closed set.

Next we show that for $\alpha_0 \in \N^\Gamma \times \N^\Gamma$ we have that
 there exists a scheduler~$\sigma$ for~$\D'(\S)$ with $\P_\sigma(\alpha_0 \models \Diamond G) < 1$
 if and only if in~$\D'(\S)$ there is a path from $\alpha_0$ to~$\tG$ avoiding~$G$.
For the ``if'' direction, construct a scheduler~$\sigma$ that ``attempts'' this path.
Since the path is finite, with positive probability, say~$p$, it will be taken.
Once in~$\tG$, the scheduler~$\sigma$ can behave according to Scheduler's winning strategy in the two-player game and thus avoid~$G$ indefinitely.
Hence $\P_\sigma(\alpha_0 \models \Diamond G) \le 1-p$.
For the ``only if'' direction, suppose that $\tG$ cannot be reached before hitting~$G$.
Then regardless of the scheduler the play remains in the winning region~$W$ of Probability in the two-player game.
Recall that $W = W_i$ for some $i \in \N$,
\newcommand{\pmin}{p_\mathit{min}}%
 hence regardless of the scheduler with probability at least $\pmin^i > 0$ the set~$G$ will, at any time, be reached within the next $i$ steps,
  where $\pmin > 0$ is the least positive probability occurring in the rules of~$\S$.
It follows that for all schedulers~$\sigma$ we have $\P_\sigma(\alpha_0 \models \Diamond G) = 1$.

It remains to show that it is decidable for a given $\alpha_0 \in \N^\Gamma \times \N^\Gamma$ whether there is path to~$\tG$ avoiding~$G$.
But this can be done using a Karp-Miller-style algorithm as in the proof of Theorem~\ref{thm-qual-reach-decid}.
In particular, the following analogue of Lemma~\ref{lem-qual-reach-decid} holds:
\begin{lemma} \label{lem-mdp-analogous}
 Let $\alpha_0 \in (\N^\Gamma \times \N^\Gamma) \setminus G$ and let $\gamma \in \N^\Gamma \times \N^\Gamma$.
 Let $\alpha_0 \to \alpha_1 \to \ldots \to \alpha_k$ be a shortest path in~$\D'(\S)$
  with $\alpha_0, \ldots, \alpha_k \not\in G$ and $\alpha_k \preceq \gamma$. 
 Then for all $i,j$ with $0 \le i < j \le k$ we have $\alpha_i \not\preceq \alpha_j$.
\end{lemma}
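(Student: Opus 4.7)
The plan is to adapt the proof of Lemma~\ref{lem-qual-reach-decid} to the augmented MDP~$\D'(\S)$ by exploiting two structural facts: (i)~the transition relation of~$\D'(\S)$ is \emph{monotone} with respect to the componentwise order~$\preceq$, and (ii)~the complement~$\bar G$ is \emph{downward-closed} under~$\preceq$. Fact~(ii) holds because $G$ is the union of the upward-closed set $F \times \N^\Gamma$ with an upward-closed set of age-unfair configurations. Fact~(i) is verified by a straightforward case analysis that relies on the bookkeeping convention $a(Y) = 0$ whenever $c(Y) = 0$: firing the same rule $X \btran{} \delta$ at two $\preceq$-ordered configurations yields $\preceq$-ordered successors.

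Assume for contradiction that $\alpha_i \preceq \alpha_j$ for some $0 \le i < j \le k$, and let $X_\ell \btran{} \delta_\ell$ denote the rule fired at step~$\ell$ of the given shortest path, for $\ell \in \{j, \ldots, k-1\}$. I will construct a strictly shorter path starting at~$\alpha_0$, ending in a configuration $\preceq \gamma$, and staying in~$\bar G$. The new path coincides with $\alpha_0 \to \cdots \to \alpha_i$ on its first~$i$ transitions and then simulates the tail $\alpha_j \to \cdots \to \alpha_k$ from the smaller starting point~$\alpha_i$, maintaining the invariant that the current configuration~$\alpha'$ on the new path satisfies $\alpha' \preceq \alpha_\ell$ immediately before processing the $\ell$th original step.

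At iteration~$\ell$, I branch on whether $X_\ell$ is present in the first component of~$\alpha'$. If so, I extend the new path by firing $X_\ell \btran{} \delta_\ell$ at~$\alpha'$; monotonicity (fact~(i)) then preserves the invariant. If $X_\ell$ is absent from~$\alpha'$, I append nothing (``skip''): here the convention forces $a'(X_\ell) = 0$, which matches the reset $a_{\ell+1}(X_\ell) = 0$ on the original path, and a componentwise check shows that $\alpha' \preceq \alpha_{\ell+1}$ still holds. After processing all $\ell \in \{j, \ldots, k-1\}$, the new path ends at some $\alpha' \preceq \alpha_k \preceq \gamma$, has length at most $i + (k-j) < k$, and stays in~$\bar G$ by downward-closedness (fact~(ii)). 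This contradicts the minimality of the original path.

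The main obstacle is verifying fact~(i) together with the skip step for the age component, since ages evolve non-additively and can drop to~$0$ without a corresponding decrease in process counts. The ``$a(Y) = 0$ whenever $c(Y) = 0$'' convention turns out to align the age resets in the new and original paths exactly where needed, carrying both the monotonicity argument and the skip argument through without extra effort.
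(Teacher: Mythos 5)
Your proof is correct and follows essentially the approach the paper intends: the paper states this lemma without a separate proof, as the analogue of Lemma~\ref{lem-qual-reach-decid}, whose proof splits off the excess $\alpha_j - \alpha_i$ and replays the suffix from the smaller configuration, skipping steps whose scheduled type falls in the excess and invoking downward-closedness of the complement of the target set. Your reformulation via the dominance invariant $\alpha' \preceq \alpha_\ell$ (fire-if-present, skip-otherwise) is exactly the right adaptation of that argument to the age-extended state space, where the additive decomposition no longer applies literally, and your treatment of the age resets via the convention $a(Y)=0$ when $c(Y)=0$ is the correct way to make the skip and monotonicity steps go through.
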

As argued in the proof of Theorem~\ref{thm-qual-reach-decid}
 one can build a (finite) tree of configurations reachable from~$\alpha_0$ via non-$G$-configurations and prune it whenever
  the path $\alpha_0 \to \ldots \to \alpha_k$ from the root to the current node~$\alpha_k$ contains a configuration $\alpha_j$ (where $0 \le j < k$)
   with $\alpha_j \preceq \alpha_k$.
If there is a path from~$\alpha_0$ to~$\tG$ via non-$G$-configurations, then the algorithms finds one.
\qed
\end{proof}

    \section{Proofs of Section~\ref{sec-semilinear}} \label{app-undecidability}

\begin{qtheorem}{\ref{thm-undecidability}}
\stmtthmundecidability
\end{qtheorem}
\begin{proof} \
\begin{itemize}
\item[(a)]
We will give a reduction from the complement
of the control-state reachability problem for (deterministic) 2-counter
machines (with counters $X$ and $Y$).
Given a 2-counter
machine $\CM = (\controls,\cmrules,q_0,q_F)$, we want to check if there is no
computation from configuration $(q_0,0,0)$ to any configuration in
$\{q_F\} \times \N^2$ in $\CM$. We will construct a pBPP $\S =
(\Gamma,\mathord{\btran{}},\Prob)$ that ``simulates'' $\CM$, a semilinear set $F
\subseteq \N^\Gamma$, and a configuration $\alpha_0 \in \N^\Gamma$ such that
$(q_0,0,0) \not\to^*_\CM \{q_F\} \times \N^2$ iff
$\P(\alpha_0 \models \Diamond F) = 1$. 

W.l.o.g., we assume that there is no transition from $q_F$ in $\cmrules$.
Define
\begin{eqnarray*}
    \Gamma & = & \controls \cup \{q_{bad}\} \cup \{X_+,X_-,Y_+,Y_-\}
            \cup \\
            &  & \{ (\theta_1,\theta_2) \mid \exists q,q',c_1,c_2 :
            \langle (q,\theta_1,\theta_2),(q',c_1,c_2)\rangle
            \in \cmrules \},
\end{eqnarray*}
where
$q_{bad},X_+,X_-,Y_+,Y_- \notin \controls$ and $\theta_1 \in 
\{X=0,X>0\}$ (resp. $\theta_2 \in \{Y=0,Y>0\}$)
is a zero test for first (resp. second) counters. Intuitively, in any pBPP
configuration
$\alpha$, the number $\alpha(X_+) - \alpha(X_-)$ (resp. $\alpha(Y_+) -
\alpha(Y_-)$) denotes the value of the first (resp. second) counter. For each
$n \in \Z$, let $\Sign(n) = +$ if $n \geq 0$; otherwise, let $\Sign(n) = -$.
For
each transition rule $(q,\theta_1,\theta_2) \to (q',c_1,c_2)$ in $\cmrules$,
we add to $\S$ a transition $q \btran{} q' (\theta_1,\theta_2)
X^{c_1}_{\Sign(c_1)} Y^{c_2}_{\Sign(c_2)}$, and a transition
$(\theta_1,\theta_2) \btran{} \varepsilon$. For each $Z \in (\controls
\setminus \{q_F\}) \cup \{X_+,X_-,Y_+,Y_-\}$, we also add $Z \btran{} q_{bad}$.
Finally, we add $q_F \btran{} q_F$. For each
transition in $\mathord{\btran{}}$, we do not actually care about its
actual probability; we simply set it to be strictly positive.

We now define $F$ to be the union of the following Presburger formulas:
\begin{enumerate}
\item $\neg\bigwedge_{(\theta_1,\theta_2) \in \Gamma}
    \left((\theta_1,\theta_2) > 0
    \to (\theta_1[(X_+ - X_-)/X] \wedge \theta_2[(Y_+ - Y_-)/Y])\right)$
\item $\neg\bigwedge_{(\theta_1,\theta_2) \in \Gamma} \left((\theta_1,\theta_2) > 0
    \to \left( (\theta_1,\theta_2) = 1 \wedge \bigwedge_{(\theta_1',\theta_2')
    \in \Gamma \setminus \{(\theta_1,\theta_2)\}} (\theta_1',\theta_2') = 0
    \right)\right)$
\item $q_{bad} > 0 \wedge q_F = 0$
\end{enumerate}
where $\theta_i[Z/Z']$ denotes replacing every occurrence of variable
$Z'$ in $\theta_i$ with the term $Z$.
The first conjunct above encodes bad configurations in $\S$ that are visited
if the simulation of $\CM$ by $\S$ is not faithful (i.e., counter tests are
violated but the transitions are still executed). Since $F$ is expressible
in Presburger Arithmetic, it follows that it is a semilinear set.

We now prove the correctness of the reduction, i.e., that
$(q_0,0,0) \not\to^*_\CM \{q_F\} \times \N^2$ iff
$\P(\alpha_0 \models \Diamond F) = 1$. Suppose that $(q_0,0,0) \to^*_\CM
(q_F,n_1,n_2)$ is witnessed by some path $\pi$, for some $n_1,n_2 \in \N$. This
implies that, for some $c_1,c_1',
c_2,c_2' \in \N$ such that $n_1 = c_1 - c_1'$ and $n_2 = c_2 - c_2'$, there is a
finite path $\pi'$ from $q_0$ to $\alpha = q_F X_+^{c_1} X_-^{c_1'} Y_+^{c_2}
Y_-^{c_2'}$ that avoids $F$.

This path is a faithful simulation of $\pi$,
which removes each $(\theta_1,\theta_2)$ as soon as it is introduced.
Since $\alpha \in \widetilde{F}$, it follows that $\P(\Run(\pi')) > 0$ and
so $\P( q_0  \models \Diamond F) < 1$. Conversely,
assume that $\P( q_0  \models \Diamond F) < 1$. It is easy to see that, for
the pBPP $\S$ that we defined above, both $\Mt(\S)$ and $\Mp(\S)$ when
restricted to states that are reachable from $\{q_0\}$ are globally
coarse. Thus, there exists
a finite path $\pi$ from $q_0$ to $q_F X_+^{c_1} X_-^{c_1'} Y_+^{c_2}
Y_-^{c_2'}q_{bad}^c$ that avoids $F$ (since each configuration in
$\widetilde{F}$ is of this form). In particular: (1) each time a symbol of the
form
$(\theta_1,\theta_2)$ is introduced in $\pi$, it is immediately removed, and
(2) rules of the form $Z \to q_{bad}$ (where $Z \in \{X_+,X_-,Y_+,Y_-\}$) are
executed only after $q_F$ is reached.
Therefore,
it follows that the path $\pi$ is a faithful simulation of $\CM$ and so corresponds
to a path $\pi': (q_0,0,0) \to^*_\CM (q_F,n_1,n_2)$ for some $n_1$ and $n_2$.
This completes the proof of correctness of the reduction.
\item[(b)]
The proof is a straightforward adaptation of the previous proof: for all $Z
\in \{X_+,X_-,Y_+,Y_-\}$, we add the rule $Z \btran{} Z$, and remove $Z
\btran{} q_{bad}$. That it suffices to restrict to 7-fair schedulers is because
there are a total of 6 types for $\S$.
\item[(c)]
The reduction is again from the acceptance problem of 2-counter machines.
As before, we are given a 2-counter machine $\CM =
(\controls,\cmrules,q_0,q_F)$ and we want to check if there is no
computation from configuration $(q_0,0,0)$ to any configuration in
$\{q_F\} \times \N^2$ in $\CM$. Without loss of generality, we may assume
that $q_0 \neq q_F$ and that there is no transition in $\cmrules$ of the form
$\langle q,(\theta_1,\theta_2),(q,c_1,c_2)\rangle$ (i.e. stay in the same
control state).  We now define the pBPP $\S$.
The set $\Gamma$ of process types is defined as follows:
\newcommand{\ON}{\bullet}
\newcommand{\OFF}{\circ}
\begin{eqnarray*}
    \Gamma & = & (\controls \cup \cmrules) \times \{\ON,\OFF\} \cup \\
           &   & \{X_+,X_-,Y_+,Y_-,Z,V\} \\
\end{eqnarray*}
Let $\bar \ON = \OFF$ and $\bar \OFF = \ON$. Let $J = \{X_+,X_-,Y_+,Y_-,Z,V\}$.
The initial configuration is $\alpha_0 = \{(q_0,\ON)\} \cup
\{ (q,\OFF): q \in \controls \setminus \{q_0\} \} \cup (\cmrules \times
\{\OFF\})$. The rules are as follows:
\begin{itemize}
\item for each $r \in \controls$ and $I \in \{\ON,\OFF\}$, we
have $(r,I) \btran{} (r,\bar I)Z$ and $(r,I) \btran{} (r,\bar I)ZV$.
\item for each $t \in \cmrules$, we have
$(t,\ON) \btran{} (t,\OFF)Z$ and $(t,\ON) \btran{} (t,\OFF)ZV$.
\item for each $t = \langle q,(\theta_1,\theta_2),(q',c_1,c_2) \rangle$,
    we have
    $(t,\OFF) \btran{} (t,\ON) X^{c_1}_{\Sign(c_1)}Y^{c_2}_{\Sign(c_2)}Z$ and
    $(t,\OFF) \btran{} (t,\ON) X^{c_1}_{\Sign(c_1)}Y^{c_2}_{\Sign(c_2)}ZV$
\item for each $W \in J$, we have $W \btran{} W$.
\end{itemize}
The semilinear target set $F$ is defined as a conjunction of the following
Presburger formulas:
\begin{itemize}
\item $(q_F,\ON) > 1$,
\item $V < 2$
\item If $V = 1$, then all of the following hold:
    \begin{itemize}
    \item $X_+ - X_- \geq 0$ and $Y_+ - Y_- \geq 0$
    \item for each $r \in \controls \cup \cmrules$, $(q,\ON) + (q,\OFF) = 1$
    \item If $Z \equiv 0 \pmod{4}$, then (a) for precisely one $q \in \controls$
    we have $(q,\ON) = 1$ and $(q',\OFF) = 1$ for each $q \neq q' \in
    \controls$, and
    (b) for each $t \in \cmrules$, we have $(t,\OFF) = 1$.
    \item If $Z \equiv 1 \pmod{4}$, then (a) for precisely one $q \in \controls$
    we have $(q,\ON) = 1$ and $(q',\OFF) = 1$ for each $q \neq q' \in
    \controls$,
    (b) for precisely one $t = \langle q,(\theta_1,\theta_2),q',(c_1,c_2)
    \rangle \in \cmrules$, we have $(t,\ON) = 1$ and for each $t \neq t'
    \in \cmrules$, we have $(t,\OFF) = 1$, and
    (c) $\theta_1[(X_+ - X_- - c_1)/X]$ and $\theta_2[(Y_+ - Y_- - c_2)/Y]$
    hold.
    \item If $Z \equiv 2 \pmod{4}$, then (a) for all $q \in \controls$
    we have $(q,\OFF) = 1$,
    (b) for precisely one $t = \langle q,(\theta_1,\theta_2),q',(c_1,c_2)
    \rangle \in \cmrules$, we have $(t,\ON) = 1$ and for each $t \neq t'
    \in \cmrules$, we have $(t,\OFF) = 1$, and
    (c) $\theta_1[(X_+ - X_- - c_1)/X]$ and $\theta_2[(Y_+ - Y_- - c_2)/Y]$
    hold.
    \item If $Z \equiv 3 \pmod{4}$, then (a) for precisely one $q \in \controls$
    we have $(q,\ON) = 1$ and $(q',\OFF) = 1$ for each $q \neq q' \in
    \controls$,
    (b) for precisely one $t = \langle q',(\theta_1,\theta_2),q,(c_1,c_2)
    \rangle \in \cmrules$, we have $(t,\ON) = 1$ and for each $t \neq t'
    \in \cmrules$, we have $(t,\OFF) = 1$, and
    (c) $\theta_1[(X_+ - X_- - c_1)/X]$ and $\theta_2[(Y_+ - Y_- - c_2)/Y]$
    hold.
    \end{itemize}
\end{itemize}
The target set $F$ above forces the scheduler to do a faithful simulation of
the input counter machine. For example, $(q_0,0,0) \to (q_3,1,0)$ via
the transition rule $t = (q_0,(X = 0,Y = 0),q_3,(1,0))$ is simulated by several
steps as follows (we omit mention of $(r,\OFF)$, when $(r,\OFF) > 0$):
\[
    (q_0,\ON) \btran{} (q_0,\ON) (t,\ON) Z X_+ \btran{} (t,\ON) Z^2 X_+
        \btran{} (q_3,\ON) (t,\ON) Z^3 X_+ \btran{} (q_3,\ON) Z^4 X_+
\]
As soon as the scheduler deviates from faithful simulation, the Probability
player can choose the rule that spawns $V$, which does not take us to $F$.
For the next non-looping move (i.e. not of the form $W \btran{} W$, which
does not help the scheduler), Probability can spawn another $V$ which
takes us to $\widetilde{F}$, which prevents the scheduler from reaching $F$
forever. Conversely, Probability cannot choose to spawn $V$ if Scheduler
performs a correct simulation; for, otherwise, $F$ will be reached in
one step. Therefore, this shows that there exists a scheduler $\sigma$ such that
$\P_\sigma(\alpha_0 \models \Diamond F) = 1$ iff the counter machine can reach
$q_F$ from $(q,0,0)$. \qed
\end{itemize}
\end{proof}

}{}

\end{document}